
\documentclass[a4paper,UKenglish]{amsart}
\usepackage{url} 
 
\usepackage{microtype}

\newtheorem{proposition}{Proposition}


\title[the Montagovian generative lexicon]{A type theoretical framework\\ for natural language semantics:\\  the Montagovian generative lexicon}

\author
{Christian Retor\'e} 
\thanks{\textbf{Affiliation:} LaBRI, Universit\'e de Bordeaux (\& IRIT-CNRS, Toulouse)  ---- This work  was achieved during my CNRS-sabbatical  at IRIT and has been supported by the project ANR LOCI}
\address{LaBRI, Universit\'e de Bordeaux (\& IRIT-CNRS, Toulouse)}
\email{christian.retore@labri.fr}
\urladdr{http://www.labri.fr/perso/retore}


\subjclass{03B65; 03B15, 03B40, 68T50} 
\keywords{type theory; computational linguistics; }

\usepackage{prooftree} 
\newcommand\coercion[2]{\mathsf c_{#1\fl#2}} 
 
\newcommand\chk{\coercion{h}{k}}
\newcommand\cij{\coercion{i}{j}}
\newcommand\cik{\coercion{i}{k}}

\newcommand\cjk{\coercion{j}{k}}
\newcommand\cjl{\coercion{j}{l}}
\newcommand\ckl{\coercion{k}{l}}

\newcommand\ckj{\coercion{k}{j}}

\usepackage{graphicx}
\DeclareGraphicsExtensions{.pdf}

\newtheorem{prop}{Property}
\newtheorem{term}{Term}

\newcommand\Land{\&^\Pi}

\newcommand\pred[1]{\widehat{#1}}
\newcommand\flex{ \textsc{(f)}}  
\newcommand\rig{ \textsc{(r)}} 

\newcommand\type[1]{^{#1}}
\newcommand\et\land
\newcommand\fl{\rightarrow}

\usepackage{latexsym} 
\usepackage{amssymb} 

\newcommand{\ltyn}{\ensuremath{\Lambda\mathsf{Ty}_n}}
\newcommand\ma[1]{"\emph{#1}"}

\newcommand\ttt{\mathbf{t}}
\newcommand\eee{\mathbf{e}}

\newcommand\systF{\mbox{\textsf{F} }}

\usepackage{gb4e} 

\begin{document}

\maketitle 

\begin{abstract}
We present a framework, named the Montagovian generative lexicon,  for computing the semantics of natural language sentences, expressed in many sorted  higher order logic.  Word meaning 
is depicted by lambda terms of second order lambda calculus (Girard's system F) with base types including a type for propositions and many types for sorts of a many sorted logic.  This framework is able to integrate a proper treatment of lexical phenomena into a Montagovian  compositional semantics, including the restriction of selection which imposes the nature of the arguments of a predicate, and the possible adaptation of a word meaning to some contexts. Among these adaptations of a word's sense to the context,  ontological inclusions are handled by an extension of system F with coercive subtyping that is introduced in the present paper. The benefits of this framework for lexical pragmatics are illustrated on meaning transfers and coercions, on possible and impossible copredication over different senses, on deverbal ambiguities, and on "fictive motion". Next we show that the compositional treatment of determiners, quantifiers, plurals,... are finer grained  in our framework. We then conclude with the  linguistic, logical and computational  perspectives opened by the Montagovian generative lexicon. 
\end{abstract}

\section{Introduction: word meaning and compositional semantics}

The study of natural language semantic and its automated analysis  is usually divided into \emph{formal semantics}, usually compositional, 
which has strong connections with logic and with philosophy of language,
and  \emph{lexical semantics}  which rather concerns word meaning and their interrelations, derivational morphology and knowledge representation. 
Roughly speaking, given an utterance,  formal semantics tries to determine \emph{who does what} according to this utterance, while lexical semantics analyses the concepts under discussions and their interplay i.e. 
 \emph{what it speaks about}.  
 \begin{exe} 
 \ex \emph{A sentence:} Some club defeated Leeds. 
 \ex \emph{Its formal semantics:} $\exists x:\eee\  (\texttt{club}(x)\ \et\ \mathit{defeated}(x,Leeds))$
 \ex 
\emph{Lexical semantics of the verb as found in a dictionary:} defeat:  
 \begin{xlist}  
 \ex 
 overcome in a contest, election, battle, etc.; prevail over; vanquish
\ex  to frustrate; thwart.
\ex  to eliminate or deprive of something expected
\end{xlist} 
\end{exe} 

Although any applications in computational linguistics requires both 
formal and compositional semantics rather applies in man machine dialogue, text generation and lexical semantics in information retrieval and classification. 
Herein we shall endow compositional semantics with a treatment of some of lexical semantics issues,
in particular for picking up the right word sense in a given context. Of course any sensible analyser, including human beings, or Moot's Grail parser \cite{moot10grail}
combines both the predicate argument structures and the relations between lexical meanings to build a semantic representation and to understand the utterance. 

\subsection{The syntax of compositional semantics}

As opposed to many 
contributions to the domain of linguistic known as "formal semantics"  
the present paper neither deals  with reference nor with truth in a given situation: we only build a logical formulae
first order or higher order, single or many sorted) that can be thereafter interpreted as one wants, if he wishes to. Hence are not committed to any particular kind of interpretation like truth values, possible worlds, game semantics,...

In the traditional view as exposed by Montague, the process of semantic interpretation  of a sentence, consists in computing a logical formula including logical modalities and intensional operators, 
from syntax and word meanings, 
and to interpret it in possible world semantics. 
Although Montague thought that intermediate steps were meaningless and should be wiped off just after computing truth values and references, in this paper we precisely focus on the intermediate step, the logical formula, that can be called the \emph{logical form} of the sentence, with particular attention to the way it is computed --- for the time being, we leave out the interpretation of these formulae.  
A reason for doing so is that we can encompass subtle questions, 
like vague predicates, generalised and vague quantifiers, for which standard notions of truth and references are inadequate possibly some interactive interpretation would be better suited, e.g.  like 
\cite{AbrusciRetoreCLMPS,LQ2012}. Another reason is that, apart from these difficult questions, 
we do not have modification to bring to standard interpretations. 

\subsection{Brief reminder on Montague semantics} 
\label{Montague} 

Let us  briefly remind the reader how one computes the logical form according to the montagovian view. 
Assume for simplicity that a  syntactic analysis is a tree  specifying for each node, which subtree applies to the other one --- the one that is applied is called the function while the other is called its argument.  A semantic lexicon provides a simply typed $\lambda$-term $[w]$ for each word $w$.  The semantics of a leaf (hence a word) $w$ is $[w]$ and the semantic $[t]$ of a sub syntactic tree 
$t=(t_1, t_2)$ is recursively defined as $[t]= ([t_1]\ [t_2])$ that is $[t_1]$ \emph{applied to} $[t_2]$, if  $[t_1]$ is the function and $[t_2]$ the argument  --- and as $[t]= ([t_2]\ [t_1])$ otherwise, i.e. when $[t_2]$ is the function and $[t_1]$ the argument. 

The typed $\lambda$-terms from the lexicon are given in such a way that the function always has a semantic type 
of the shape $a\fl b$ that matches the type $a$ of the argument, and the semantics associated with the whole tree has the semantic type $\ttt$, that is the type of propositions. This correspondence between syntactical categories and semantic types, 
which extends into a correspondence between parse structures and logical forms is crystal clear in categorial grammars, see e.g. \cite[Chapter 3]{MootRetore2012lcg}. Typed $\lambda$-terms usually are defined out of two  base types, $\eee$ for individuals  (also known as \textbf{e}ntities) and  $\ttt$ for propositions (which have a \textbf{t}ruth value). Logical formulae can be defined in this typed $\lambda$-calculus as first observed by Church long ago. 
This early use of lambda calculus, where formulae are viewed typed lambda terms, can not be merged with the more familiar view of  typed lambda terms as proofs. The proof which such a typed lambda term correspond to is simply the proof that the formula is well formed, e.g. that a two-place predicate is properly applied to \emph{two} individual terms of type $\eee$ and not to more or less objects, nor to objects of a different type etc. 
This initial vision of lambda calculus was designed for a proper handling of substitution in deductive systems \`a la Hilbert. 
One needs constants for the logical quantifiers and connectives:

\begin{center}
$
\begin{array}[t]{r|l}
\mbox{Constant} & \mbox{Type}\\ \hline 
	\exists & (\eee \fl \ttt) \fl \ttt \\ 
	\forall & (\eee \fl \ttt) \fl \ttt \\ 
\end{array} 
$ \hfill 
$
\begin{array}[t]{r|l} 
\mbox{Constant} & \mbox{Type}\\ \hline 
	\textrm{and} & \ttt \fl (\ttt \fl \ttt) \\ 
	\textrm{or} & \ttt \fl (\ttt \fl \ttt) \\ 
	\textrm{implies} & \ttt \fl (\ttt \fl \ttt)
\end{array} 
$ \hfill 
$
\begin{array}[t]{r|l} 
\mbox{Constant} & \mbox{Type}\\ \hline 
	\mathit{defeated} & \eee \fl (\eee \fl \ttt) \\ 
	\mathit{won, voted} & (\eee \fl \ttt) \\ 
	\mathit{Liverpool, Leeds} & \eee 
\end{array} 
$
\end{center} 

\noindent as well as predicates for the precise language to be described --- 
a binary predicate like $won$ has the type $\eee\fl\eee\fl\ttt$.

A small example goes as follows. Assume the syntax says that the structure of the sentence "\emph{Some club defeated Leeds.}" is 
\begin{center}
(some\ (club)) (defeated\ Leeds) 
\end{center} 
where the function is always the term on the left. If the semantic terms are as in the lexicon in figure \ref{semanticlexicon}, placing the semantical terms in place of the words yields a large $\lambda$-term that can be reduced: 

\begin{figure} 
\begin{center} 
\begin{tabular}{ll} \hline 
\textbf{word} &  \textbf{\itshape semantic type $u^*$}\\ 
& \textbf{\itshape  semantics~: $\lambda$-term of type $u^*$}\\ 
&  {\itshape  $x\type{v}$ the variable or constant $x$ 
is of type $v$}\\ \hline 
\textit{some} 
& $(e\fl t)\fl ((e\fl t) \fl t)$\\ 
& $\lambda P\type{e\fl t}\  \lambda Q\type{e\fl t}\  
(\exists\type{(e\fl t)\fl t}\  (\lambda x\type{e}  (\et\type{t\fl (t\fl t)} (P\ x) (Q\ x))))$ \\  \hline 
\textit{club}  & $e\fl t$\\ 
& $\lambda x\type{e} (\texttt{club}\type{e\fl t}\  x)$\\  \hline 
\textit{defeated} & $e\fl (e \fl t)$\\ 
& $\lambda y\type{e}\  \lambda x\type{e}\  ((\texttt{speak\_about}\type{e \fl (e \fl t)}\  x)  y)$ \\  \hline 
\textit{Leeds} &$e$ \\ &  Leeds 
\end{tabular}
\end{center} 
\caption{A simple semantic lexicon} 
\label{semanticlexicon}
\end{figure}

$$
\begin{array}{c} 
\Big(\big(\lambda P\type{e\fl t}\ \lambda Q\type{e\fl t}\  (\exists\type{(e\fl t)\fl t}\  (\lambda x\type{e}  (\et (P\ x) (Q\ x))))\big)
\big(\lambda x\type{e} (\texttt{club}\type{e\fl t}\  x)\big)\Big) \\ 
\Big(
\big(\lambda y\type{e}\  \lambda x\type{e}\  ((\texttt{defeated}\type{e\fl (e\fl t)}\  x)  y)\big)\ Leeds\type{e}\Big)\\ 
\multicolumn{1}{c}{\downarrow \beta}\\ 
\big(\lambda Q\type{e\fl t}\  (\exists\type{(e\fl t)\fl t}\  (\lambda x\type{e}  (\et\type{t\fl (t\fl t)}  
(\texttt{club}\type{e\fl t}\  x) (Q\ x))))\big)\\ 
\big(\lambda x\type{e} \ ((\texttt{defeated}\type{e\fl (e \fl t)}\  x)  Leeds\type{e})\big)\\ 
\multicolumn{1}{c}{\downarrow \beta}\\ 
\big(\exists\type{(e\fl t)\fl t}\  (\lambda x\type{e}  (\et (\texttt{club}\type{e\fl t}\  x) ((\texttt{defeated}\type{e\fl (e\fl t)}\  x)  Leeds\type{e})))\big)
\end{array}
$$ 

This $\lambda$-term of type $\ttt$ that can be called the \emph{logical form} of the sentence, represents the following formula of predicate 
calculus (admittedly more pleasant to read): 

$$\exists x:\eee\  (\texttt{club}(x)\ \et\ \mathit{defeated}(x,Leeds))$$

The above described procedure is quite general: starting  a properly defined semantic lexicon whose terms only contains the logical constants and the predicates of the given language one always obtain a logical formula. Indeed, such $\lambda$-terms always reduce to a unique normal form and any normal $\lambda$-term of type $\ttt$ (preferably $\eta$ long, see e.g. \cite[Chapter 3]{MootRetore2012lcg}) corresponds to a logical formula. 

If we closely look at the Montagovian setting described above,  
we observe that it is weaving  two different "logics": 

\begin{description} 
\item[Logic/calculus for meaning assembly] (a.k.a glue logic, metalogic,...) In our example, this is simply typed $\lambda$-calculus with two base types $\eee$ and $\ttt$ --- these terms are the proof in intuitionistic propositional logic. 
\item[Logic/language for semantic representations] In our example, that is higher-order predicate logic.\footnote{It can be first-order logic if reification is used, but this may 
induce unnatural structure and exclude some readings.}  
\end{description} 

The framework we present in this paper mainly concerns the extension of the metalogic and the reorganisation of the lexicon in order to incorporate some phenomena of lexical semantics, first of all restrictions of selection. Indeed, in the standard type system above nothing prevents a mismatch between the real nature of the argument and its expected nature. 
Consider the following sentences:\footnote{We use the standard linguistic notation: a \ma{*} in front of a sentence points out that the sentence is incorrect, a \ma{?} indicates that the correctness can be discussed and the absence of any symbol in front means that the sentence is correct.} 

\begin{exe} 
\ex * A chair barks. 
\ex * Jim ate a departure 
\ex ?  The five is fast \label{five} 
\end{exe} 

Although they can be syntactically analysed, they should not receive a
semantical analysis. Indeed, \ma{barks} requires a \ma{dog} 
or at least an \ma{animate} subject while a \ma{chair} is neither of them; 
\ma{departure} is an event, which cannot be an \ma{inanimate}  object that could  be eaten; 
finally a \ma{number} like \ma{five} cannot do anything fast --- but there are particular contexts in which this can happen and we shall also handles these meaning transfers.

\subsection{The need  of integrating lexical semantics in formal semantics} 

In order to block the interpretation of the semantically illformed sentences above, it is quite natural to use \emph{types}, where the word \emph{type}  be both understood in its intuitive and in its formal meaning. The type of the subject of barks should be \ma{dog}, the type of \ma{fast} objects should be \ma{animate}, and the type of the object of \ma{ate} should be \ma{inanimate}. Clearly, having, on the formal side a unique type $\eee$ for all entities is not sufficient.

The traditional view with a single type $\eee$ for entities has another related drawback. It is unable to relate related predicates, although a usual dictionary does. 
A common noun like \ma{book} is usually viewed as a unary predicate "\underline{book}:$\eee\fl\ttt$"
while  a transitive verb like  \ma{read} is viewed as a binary predicate "\underline{read}:$\eee\fl\eee\fl\ttt$"
This gives the proper argument structure  of \emph{Mary reads a book.}  as ($\exists x:\eee book(x)\ and\ reads(Mary, x)$) but this traditional setting cannot relate  the predicates $\underline{book}$ and $\underline{read}$ --- while any dictionary does. If we had several types, as we shall do later on,  we could stipulate that the object of \ma{read} ought to be something that can be \ma{read},
that one can  \ma{read} and \ma{write} a \ma{book}.  Such connections like predicates like  \ma{book}, \ma{write}, \ma{read} would allow to interpret sentences like \ma{I finished my book} which usually means  \ma{I finished to  read my book} and sometimes \ma{I finished to write my book}.

Hence we need a more sophisticated type theory than the one initially used by Montague to filter semantically invalid sentences. But in some cases some flexibility is needed to accept and analyse sentences in which a word type is coerced into another type.  In sentence \ref{five}, in the context of a football match, the noun \ma{five} can be considered as a player i.e. a \ma{person} who plays the match with the number 5 jersey, who can \ma{run} and be \ma{fast}. 

There is a large literature on such lexical meaning transfers and coercions, starting from 1980 \cite{bierwisch1979,bierwisch1983,cruse1986lexical,nunberg-transfer} --- see also \cite{Lauer2004,blutner-pragma} 
for a more recent account of some theories.  
In those pioneering studies, the objective  is mainly to classify these phenomena, to find the rules that govern them. 
The quest of a computational formalisation that can be incorporated into an automated semantic analyser
appears with Pustejovsky's generative lexicon in 1991 \cite{Pustejovsky91,Pus95}. The integration of lexical issue 
into compositional semantics à la Montague and type theories appears with the work by Nicholas Asher \cite{AP01,asher-typedriven} which lead to the book \cite{Asher2011wow}, and  differently in some works of Robin Cooper with an intensive use of records from type theory to recover frame semantics with features and attributes inside type-theoretical  compositional semantics \cite{cooper-codygeqlic,Cooper2011lacl}

\subsection{Type theories for integrating lexical semantics} 

As the afore mentioned contribution suggest, finer-grained type theories are quite a natural framework both for formal semantics \`a la Montague and for selectional restriction and  coercions. Such a model must extend the usual ones into two directions: 
\begin{enumerate}  
\item \label{ttlex}
Montague's original type system and metalogic should be enriched to encompass lexical issues (selectional restriction and coercions), and 
\item \label{ttsem} 
the usual phenomena studied by formal semantics (quantifiers, plurals, generics) should be extended to this richer type system and so far 
only  Cooper and us did so
\cite{cooper-codygeqlic,Cooper2011lacl,
ChatLuo2012cslp,MootRetore2011coconat,MMR2013lenls,LMRS2012taln,Retore2012rlv} 
\end{enumerate} 

At the end of this paper, we shall provide a comparison of the current approaches, 
which mainly focus on \ref{ttlex}. Let us list right now what  the current approaches are: 

\begin{itemize} 
\item The system work with type based coercions and relies on some \emph{Modern Type Theory (MTT)}
\footnote{This name \emph{Modern Type Theory (MTT)} covers several variants of modern type theories, including 
Martin-L\"of type theory, the Predicative Calculus of (Co)Inductive Constructions (pCic), 
the Unifying Theory of dependent Types (UTT),... --- this later one being the closest to the system used by Zhaohui Luo} 
 --- this correspond to the work of  Zhaohui Luo \cite{Luo2011lacl,Luo2012lacl,LuoXue2012lacl,ChatLuo2012cslp}
\item The system work with type based coercions and relies on  usual typed 
$\lambda$-calculus extended with some categorical logic rules 
--- this approach by Asher \cite{AP01,asher-typedriven}  culminated  in his book \cite{Asher2011wow} 
\item The system work with term based coercions and relies on second order $\lambda$-calculus --- this is our approach, first introduced with Bassac, Mery, and further developed with Mery, 
Moot, Pr\'evot, Real-Coelho. \cite{BMRjolli,MPR2011taln,MPR2011cid,MootRetore2011coconat,MMR2013lenls,LMR2012cmlf,LMRS2012taln,Retore2012rlv,RealCoelhoRetore2013unilog,RealRetore2013jolli}
\end{itemize}

In fact our approach differs from the concurrent ones mainly because of the organisation 
of the lexicon and of the respective r{\^o}les of types and terms. 
Our approach can be said to be word driven, as it account for the (numerous) idiosyncrasies of natural language in particular  the different behaviour of words of the same type is coded by assigning them different terms, while others derive everything from the types. 

The precise type system we use, namely system \systF, does not make a big difference with other type theories, and as far as the presentation of the system is concerned, it is the \emph{simplest} of all systems, because it only contains four term building operations (two of them being the standard $\lambda$-calculus rules, the two other one being their second order counter part) and two reduction rules (one of them being the usual beta reduction and the other one being its second order counterpart). Dependent types, that types defined from terms are not avoided.

\section[A Montagovian generative lexicon for compositional semantic and lexical pragmatics]{A Montagovian generative lexicon\\ for compositional semantic and lexical pragmatics} 
\label{ltyn} 

We are to present our solution for introducing some lexical issues in a compositional framework \`a la Montague. 

\subsection{Guidelines for a semantic lexicon} 

We should keep in mind that whatever the precise solution presented, the following questions 
must be addressed in order to obtain a computational model, so here are the guidelines of our model: 

\begin{itemize} 
\item What is the logic for semantic representation?\\ \it 
We use \emph{many-sorted} higher order predicate calculus. As usual, the higher order can be reified in first order logic, so it can be first order logic, but in any case the logic has to be many sorted.  Asher \cite{Asher2011wow} is quite similar on this point, while Luo use Type Theory \cite{Luo2012lacl}. \normalfont 
\item What are the sorts?\\ \it The sorts are the base types. As discussed later on these sorts may vary from a small set of ontological kinds to any formula of one variable. We recently proposed that they correspond to \emph{classifiers} in language with classifiers: this give sorts a linguistically and cognitively motivated basis. \cite{MeryRetore2013nlpcs}
\normalfont 
\item What is the metalogic (glue logic) for meaning assembly?\\ \it 
We use second order $\lambda$-calculus (Girard system \systF) in order to factor operations that apply uniformly to family  types. For specific coercions, like ontological inclusion we use subtyping introduced in the present paper. 
Asher \cite{Asher2011wow} use simply typed $\lambda$-terms with additional categorical rules, while Luo also use Type Theory with coercive subtyping \cite{Luo2012lacl}. \normalfont 
\item What kind of information is associated with a word in the lexicon? \\ \it 
Here it will be a finite set of $\lambda$-terms, one of them being called the principal $\lambda$-term while the other are called optional. 
Other approaches make use more specific terms and rules. 
\normalfont 
\item How does one compose words and constituents for a compositional semantics?\\ \it 
We simply apply one $\lambda$-term to the other, following the syntactic analysis,  perform some transformations corresponding to coercions and presupposition, 
and reduce the compound by $\beta$-reduction. \normalfont 
\item How is rendered the semantic incompatibility of two components?\\ \it 
By type mismatch, between a function of type $A\fl X$ and an argument of type $B\neq A$, and others do the same. \normalfont 
\item How does one allow an \textsl{a priori} impossible composition?\\ \it 
By using the optional $\lambda$-terms, which change the types of at least one of the two terms being composed, the function and argument. Both the function and the argument may provide some optional lambda terms. 
Other approaches rather use type driven rules. \normalfont 
\item How does one allow and block felicitous and infelicitous copredications on various aspect of a word?\\ \it 
An aspect car be explicitly declared as incompatible with any other aspect. More recently we saw that linear types (linear system \systF) can account for  compatibility between arbitrary subsets of the possible aspects. \cite{MR2013nlcs}
\end{itemize}

Each word in the lexicon is given a principal term, as well as a finite number,  possibly nought,  optional terms that  licence type change and implement coercions. They may be inferred from an ordinary dictionary, electronic or not. Terms combine almost as usual except that there might be type clashes,  which accounts for infringement of   selectional restriction: in this case optional terms may be use to solve the type mismatch. In case they lead to different results these results should be considered as different possible readings --- just as the different readings with different quantifier scopes are considered by formal semantics as different possible readings of a sentence.

Let us first present the type and terms and thereafter we shall come back to the 
the composition modes. 

\subsection{Remarks on the type system for semantics}

We use a type system that resembles Muskens $Ty_n$ \cite{Muskens91} where the usual type of individuals, $\eee$ is replaced with a finite but  large set of base types $\eee_1,\ldots,\eee_n$ for individuals, for instance \emph{objects}, \emph{concepts}, \emph{events},... These base types  are the sorts of the many sorted logic whose formulae express semantic representations. The set of base types as well as their interrelations  can express some ontological relations as Ben Avi and Francez thought ten years ago 
\cite{BenAviFrancez2004cg}

 For instance, assume we have a many sorted logic with a sort 
$\zeta$ for animals, a sort $\phi$ for physical objects and a predicate $eat$ whose arguments are of respective sort $\phi$ and $\zeta$ 
the many sorted formula $\forall z:\zeta\ \exists x:\phi\ eat(z,x)$ is rendered in type theory by the $\lambda$-term: 
$\forall^\zeta (\lambda z^\zeta \exists^\phi \lambda x^\phi ((eat\ x) z)$ with $eat$ a constant of type $\phi \fl \zeta \fl \ttt$. Observe that the type theoretic formulation requires a quantifier for each sort $\alpha$ 
of object, that is a constant $\forall^{\alpha}$ of type $(\alpha\fl \ttt) \fl\ttt$. 
\footnote{We do not speak about interpretations, but if one wishes to, we do not necessarily ask for the usual requirement that sorts are disjoint: this is coherent with the fact that in type theory, nothing prevents a pure term to have several types.} 

What are the base types? We have a tentative answer, but we cannot be too 
sure of the answer. 
Indeed, this is a subtle question depending on ones philosophical convictions, and also of the expected precision of the semantic representations.\footnote{For instance, a dictionary says that pregnant can be said of a \ma{woman or  female animal}, but can it be said of a \ma{grandma} or of a \ma{veal}?} 
but it does not really interfere with the formal and computational model we present here. 
Let us mention some natural sets of bases types are, from the smallest to the largest: 
\begin{enumerate} 
\item \label{Montague} \emph{A single type $\eee$ for all entities} (but as seen above it cannot account for lexical semantics) 
\item \label{sorts} \emph{A very simple ontology} distinguishing events, physical objects, living entities, concepts, ... (this resembles Asher's position) 
\item \label{classifiers} Many Asian languages (Chinese, Japanese, Korean, Malay, Burmese, Nepali,...) and all Sign Languages, have \emph{classifiers} that are pronouns specific to classes of nouns (100--400) especially detailed for physical objects that are handled, animals.\emph{There are almost no classifiers in European languages. Nevertheless a word like \ma{head} in \ma{Three heads of cattle.} can be considered as a classifier.} Hence classifiers are a rather natural set of base types, or the importation of the classifiers of a language  in one that does not have any. But we do not claim that this is the definitive answer. For instance, for a specific task, some other set of base types may be better. 
 \cite{MeryRetore2013nlpcs}
\item \label{cn} \emph{A type per common noun} as proposed by Luo in \cite{Luo2012lacl}) 
\item \label{form} \emph{A type for every formula} with a single free variable as suggested by some colleague (N. Asher or F. Corblin) after a talk of mine.  
\end{enumerate} 

Our opinion is that types should be cognitively natural classes and rich enough to express selectional restrictions. 
Whatever types are, there is a relation between types and  properties. With base types as in \ref{form}, the correspondence seems quite clear, 
but,  because types can be used to express new many sorted formulae, the set of types is in this case defined as a least fixed point. 
For other sets of base types, e.g. \ref{cn} or \ref{sorts} for each type $\tau$ there should be a corresponding predicate which recognises $\tau$ entities 
among entities of a larger type. For instance, if there is a type \emph{dog} there should be a predicate $\pred{dog}:\alpha\fl\ttt$
but what should be $\alpha$ the type of its argument? Should it be \ma{animal}, \ma{animate},... the simplest solution is to assume a type of all individuals, that is Montague's $\eee$, and to say that corresponding to any base type $\tau$, there is a predicate, namely $\pred{\tau}$ of type $\eee\fl\ttt$\footnote{An alternative solution, used by us and others 
\cite{Retore2012rlv,ChatLuo2012FG}
 would be $\Pi \alpha.\ \alpha\fl\ttt$, using quantification over types to be defined in next section.} 

Let us say here a remark on the predicate constants in the language. 
If a predicate constant, say $Q$ is given with type $u\fl \ttt$ with $u\neq \eee$ which sometimes is more natural  there is an obvious extension $Q_e$ which should be interpreted as false for any object that cannot be viewed as an $u$-object. Given predicate in the language do also have restrictions, $Q|_v$ 
which is defined as $Q$ on $q\cap v$ where $q$ is  the domain of $Q$ and false elsewhere.

\subsection[Many sorted formulae in second order lambda calculus]{\ltyn: many sorted formulae in second order lambda calculus} 

Since we have many base types, and many compound types as well, it is quite convenient and almost necessary 
to define operations over family of similar terms with different types,  to have some flexibility in the typing, and to have terms that act upon families of terms and types. 
Hence we shall extend further  $Ty_n$  into $\ltyn$ by using Girard's system \systF as the type system 
\cite{Girard2011blindspot,Girard71}. System \systF involves quantified types whose terms can be specialised to any type. 

The types of \ltyn are defined as follows: 
 \begin{itemize} 
\item 
Constants types $\eee_i$ and $\ttt$, as well as type variables $\alpha,\beta,\ldots$ are types. 
\item 
Whenever $T$ and $\alpha$ respectively are a type and a type variable $\Pi\alpha.\ T$ is a type. The type variable may or may not occur in the type $T$. 
\item 
Whenever $T_1$ and $T_2$ are types, $T_1\fl T_2$ is a type as well.
\end{itemize}

The terms of $\ltyn$, which encode proofs of quantified propositional intuitionistic logic, are defined as follows: 
\begin{itemize} 
\item A variable  of type $T$ i.e. $x:T$ or  $x^{T}$  is a \emph{term}, and there are countably many variables of each type.
\item In each type, there can be a countable set of constants of this type, and a constant of type $T$ is a term of type $T$. Such constants are needed for logical operations and for the logical language (predicates, individuals, etc.). 
\item 
$(f\ \tau)$ is a term of type $U$ whenever $\tau:T$ and  $f:T\fl U$. 
\item 
$\lambda x^{T}.\ \tau$ is a term of type $T\fl U$ whenever $x:T$, 
and $\tau:U$.  
\item $\tau \{U\}$ is a term of type $T[U/\alpha]$
whenever $\tau:\Lambda \alpha.\ T$, and $U$ is a type. 
\item $\Lambda \alpha. \tau$ is a term of type $\Pi \alpha. T$
whenever $\alpha$ is a type variable, and  $\tau:T$ a term without any free occurrence of the type variable $\alpha$ in the type of a free variable of $\tau$.  
\end{itemize}

The later restriction is the usual one on the proof rule for quantification in propositional logic: one should not conclude that $F[p]$ holds for any  proposition $p$
when assuming $G[p]$ --- i.e. having a free hypothesis of type $G[p]$. 

The reduction of the terms in system F or its specialised version \ltyn is defined by the two following reduction schemes that resembles each other:  
\begin{itemize} 
\item $(\lambda x. \tau) u$ reduces to $\tau[u/x]$ (usual $\beta$ reduction). 
\item $(\Lambda \alpha. \tau) \{U\}$  reduces to $\tau[U/\alpha]$ (remember that $\alpha$ and $U$ are types). 
\end{itemize} 

As an example, we earlier said that in $Ty_n$ we needed a first order quantifier per sort (.e. per base type). 
In \ltyn it is sufficient to have a single quantifier $\forall$, that is a constant of type $\Pi \alpha.\ (\alpha \fl \ttt)\fl \ttt$ . Indeed, this quantifier can be specialised to specific types, for instance to the base type $\zeta$, yielding $\forall \{\zeta\}: (\zeta \fl \ttt)\fl \ttt$, or even to properties  of $\zeta$ objects, which are of type $\zeta\fl\ttt$, yielding  
$\forall \{\zeta\fl\ttt\}: ((\zeta \fl \ttt) \fl \ttt)\fl \ttt$. We actually do quantify over higher types, for instance in the examples below respectively quantify over propositions with a human subject, and the next one over propositions: 

\begin{exe}
\ex 
He did everything he could to stop them. 
\ex 
And he believes whatever is politically correct and sounds good.
\end{exe} 

As Girard showed \cite{Girard2011blindspot,Girard71}
reduction is strongly normalising and confluent 
\textit{every term of every type admits a unique normal form which is reached no matter how one proceeds.} 
\footnote{This is one way to be convinced of the soundness of \systF, which defines types depending on other types including themselves: as it is easily observed that there are no normal closed terms of type $\Pi X.\ X\equiv \bot$ the system is necessarily coherent. Another way is to construct a concrete model, called coherence spaces, where types are interpreted as countable sets with a  binary relation (coherence spaces), and terms up to normalisation are interpreted as structure preserving functions (stable functions). \cite{Girard2011blindspot}} The normal forms (which can be asked to be $\eta$-long) can be characterised as follows (for a reference see e.g.  \cite{Huet76}) : 

\begin{proposition}\label{head} 
A normal 
$\Lambda$-term $\mathcal{N}$ of system \systF, $\beta$ normal and $\eta$ long to be precise, has the following structure: 

\begin{center} 
\begin{tabular}{crcl} 
& & & sequence\ of\ $\{\cdots\}$ and $(\cdots)$\\ 
&sequence\ of\ & head\ & applications\ to\ types\ $W_k$ \\ 
&$\lambda$ and\ $\Lambda$ abstractions  &
variable &
and\ normal\ terms\ $t_l^{X_l}$\\ 
$\mathcal{N}\ =$&$\overbrace{(\ \lambda x_i^{X_i}\ |\ \Lambda X_j\ )^*}$ & 
$\overbrace{h^{(\Pi X_k |X_l\fl)^* Z}}$ &
$\overbrace{(\ \{W_k\}\ |\ t_l^{X_l})^*\ )}$ 
\end{tabular}
\end{center} 

\end{proposition}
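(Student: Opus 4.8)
The plan is to prove the statement by structural induction on the $\beta$-normal, $\eta$-long term $\mathcal{N}$, reading its shape off the type in a way that alternates a \emph{checking} phase (which produces the leading block of abstractions) and a \emph{neutral} phase (which produces the head variable together with its spine of applications). Throughout, $\beta$-normality is the easy ingredient and $\eta$-longness is the one that carries the real content.

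First I would peel off the leading abstractions. If $\mathcal{N} = \lambda x^X.\,\mathcal{N}'$ or $\mathcal{N} = \Lambda X.\,\mathcal{N}'$, then $\mathcal{N}'$ is again $\beta$-normal and $\eta$-long (any redex or missing $\eta$-expansion in $\mathcal{N}'$ would already be one in $\mathcal{N}$), so the induction hypothesis applies to $\mathcal{N}'$ and I simply prepend the abstraction to its abstraction-prefix. After finitely many steps I reach a term that is not an abstraction; call it $\mathcal{M}$. Because $\mathcal{M}$ is $\beta$-normal, its head cannot be a $\lambda$- or $\Lambda$-abstraction (that would create a redex of one of the two reduction schemes), so writing $\mathcal{M}$ as a maximal application spine $\mathcal{M} = h\,a_1\cdots a_m$, where each $a_i$ is either a type argument $\{W\}$ or a term argument, the head $h$ must be a variable or a constant --- this is the head variable of the statement. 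Each term argument is a proper subterm of a normal term, hence itself $\beta$-normal and $\eta$-long, so by the induction hypothesis it has exactly the claimed structure; these are the normal terms $t_l^{X_l}$. Typing then forces the type of $h$ to be an alternation $(\Pi X_k \mid X_l\fl)^* Z$, since each $\{W\}$ consumes a $\Pi$ and each term argument consumes an $\fl$.

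The one genuinely load-bearing step is showing that the type $Z$ of the whole spine $\mathcal{M}$ is \emph{atomic} --- a base type $\eee_i$, $\ttt$, or a type variable --- rather than an arrow or a $\Pi$. This is precisely where $\eta$-longness is used. Since $\mathcal{M}$ was obtained after stripping every leading abstraction, it sits in a position in which an $\eta$-expansion would apply: if its type were $A\fl B$ it could be rewritten $\lambda x^A.(\mathcal{M}\,x)$, and if it were $\Pi\beta.\,C$ it could be rewritten $\Lambda\beta.(\mathcal{M}\{\beta\})$. Being $\eta$-long rules both out, so $Z$ must be atomic; dually, the same principle applied one level up guarantees that the abstraction-prefix peeled off in the first step is \emph{maximal}, i.e. every $\fl$ or $\Pi$ in the prefix of the type of $\mathcal{N}$ is matched by a $\lambda$ or $\Lambda$. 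Assembling the prefix, the head $h$, and the spine yields the displayed normal form.

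I expect the main obstacle to be pinning down the exact formulation of \emph{$\eta$-long} and checking that the permitted $\eta$-expansions are blocked \emph{exactly} at atomic types, for both the function and the $\Pi$ cases simultaneously; once that is fixed, the remaining bookkeeping (well-typedness of the spine, the side condition on $\Lambda$-abstraction forbidding a free type variable in the type of a free term variable) is routine and is inherited from the typing rules of \systF. An alternative to the induction would be to define a grammar of $\eta$-long $\beta$-normal forms and show it is complete, but since that grammar is essentially the statement itself, it reduces to the same argument; compare the classical treatment in \cite{Huet76}.
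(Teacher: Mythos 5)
Your proof is correct, and it is essentially the classical argument: the paper itself gives no proof of this proposition, deferring to the reference (Huet's characterisation of long $\beta\eta$-normal forms), which proceeds exactly as you do --- induction on the term, with $\beta$-normality forcing the head of the application spine to be a variable or constant, and $\eta$-longness forcing the result type of the spine to be atomic and the abstraction prefix to be maximal. Your closing caveat about fixing the definition of $\eta$-long is the right one to flag, and your handling of it (expansions blocked exactly at atomic types, for both the $\fl$ and $\Pi$ cases) is the standard resolution.
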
 

This has  a good consequence for us, see e.g. \cite[Chapter 3]{MootRetore2012lcg}: 

\begin{prop}[\ltyn terms as formulae of a many-sorted logic] If the predicates, the constants and the logical connectives and quantifiers 
are the ones  from a many sorted logic of order $n$ (possibly $n=\omega$) then the normal terms of $\ltyn$ of type $\ttt$ unambiguously correspond to many sorted  formulae of order $n$. 
\end{prop}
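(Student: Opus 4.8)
The plan is to read the correspondence off directly from the normal-form shape supplied by Proposition \ref{head}, by a single structural induction that builds two translations at once: one sending a $\beta$-normal $\eta$-long term of type $\ttt$ to a formula, and one sending a normal term whose type is a sort $\eee_i$ (or, at higher orders, an admissible functional type) to a term of the many-sorted logic. These two must be defined by mutual recursion, since the arguments fed to a predicate are logical terms whereas the scope of a quantifier is again a formula.

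First I would note that, because $\ttt$ is a base type and therefore atomic, the leading block of $\lambda$- and $\Lambda$-abstractions in Proposition \ref{head} is empty for any term of type $\ttt$: such an $\eta$-long normal term is a bare application spine $h\,a_1\cdots a_m$ whose head $h$ is a variable or a constant whose type, after the type-applications $\{W_k\}$ of the spine have been absorbed, has the shape $A_1\fl\cdots\fl A_m\fl\ttt$. I would then split on $h$. If $h$ is a binary connective, of type $\ttt\fl\ttt\fl\ttt$, the spine is $h\,A\,B$ with $A$ and $B$ normal of type $\ttt$, which by induction are formulae, and the term denotes the corresponding compound. If $h$ is the quantifier $\forall$ (or $\exists$), of type $\Pi\alpha.\,(\alpha\fl\ttt)\fl\ttt$, the spine is $h\{W\}P$ with $P$ of type $W\fl\ttt$; $\eta$-longness forces $P=\lambda x^{W}.\,B$ with $B$ normal of type $\ttt$, so by induction $B$ is a formula and the term denotes $\forall x{:}W.\,B$, it being checked that $W$ is a type admissible at order $n$. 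If $h$ is a predicate constant $R$, each argument $a_j$ is normal of the appropriate type and is handled by the term-translation, giving the atom $R(\pred{a_1},\dots,\pred{a_k})$; a free head variable of propositional type, bound by some enclosing quantifier, is treated exactly like this last case, which is what makes the higher-order readings available.

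The term-translation is obtained by reading Proposition \ref{head} once more, now for a normal term of a sort (or of a higher type): its head is an individual constant, a function symbol, or a bound variable, and its arguments recurse, so everything bottoms out. For the converse I would, by induction on the syntax of formulae and terms, send each to a unique $\eta$-long normal term of the matching type, and then verify that the two passages are mutually inverse; this is what delivers the claimed \emph{unambiguous} (that is, bijective) correspondence. The main obstacle is the book-keeping of the $\eta$-long condition together with the order stratification: one must check that the argument of every quantifier is genuinely a $\lambda$-abstraction, so that it names a formula with a fresh bound variable rather than an arbitrary functional term, and that the types permitted as quantifier domains and as predicate arguments are exactly those of order below $n$. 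When $n=\omega$ this stratification is vacuous and the induction goes through immediately; the finite-order case is the delicate part, since it requires a complexity measure on \ltyn-types that tracks the logical notion of order across the $\fl$ and $\Pi$ constructors.
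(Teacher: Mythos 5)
Your proposal is correct and takes essentially the approach the paper itself relies on: the paper gives no self-contained proof (it presents the property as a consequence of Proposition~\ref{head} and defers to \cite[Chapter 3]{MootRetore2012lcg}), and that intended argument is precisely yours, namely reading the formula off the head-variable/spine structure of $\beta$-normal $\eta$-long terms of type $\ttt$, with the quantifier case using $\eta$-longness to expose the bound variable. Your explicit treatment of the mutual recursion between formulae and logical terms, of the converse direction giving unambiguity, and of the admissibility of the types instantiating $\Pi$ at order $n$ only spells out details that the paper leaves to the cited reference.
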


Let us  illustrate how \systF factors uniform behaviours. 
Given types $\alpha$, $\beta$, two predicates $P^{\alpha\fl \ttt}$, $Q^{\beta\fl \ttt}$,  over entities of respective kinds $\alpha$ and  $\beta$ for any $\xi$ with two morphisms from  $\xi$ to $\alpha$ and to $\beta$, see figure \ref{polyandfig} 
\systF contains a term that can coordinate the properties $P,Q$ of (the two images of) an entity of type $\xi$,
every time we are in a situation to do so --- i.e. when the lexicon provides the morphisms.  

\begin{term} \label{polyandterm} 
[Polymorphic AND]  is defined as $\Land$ =\newline 
$\Lambda \alpha \Lambda \beta
\lambda P^{\alpha \fl \ttt} \lambda Q^{\beta\fl \ttt} 
 \Lambda \xi \lambda x^\xi 
 \lambda f^{\xi\fl\alpha} \lambda g^{\xi\fl\beta}.\ 
(\textrm{and}^{\ttt\fl\ttt\fl\ttt} \ (P \ (f \ x)) (Q \ (g \  x))) 
$ 
\end{term} 

\begin{figure}
\begin{center}
\includegraphics[scale=0.3]{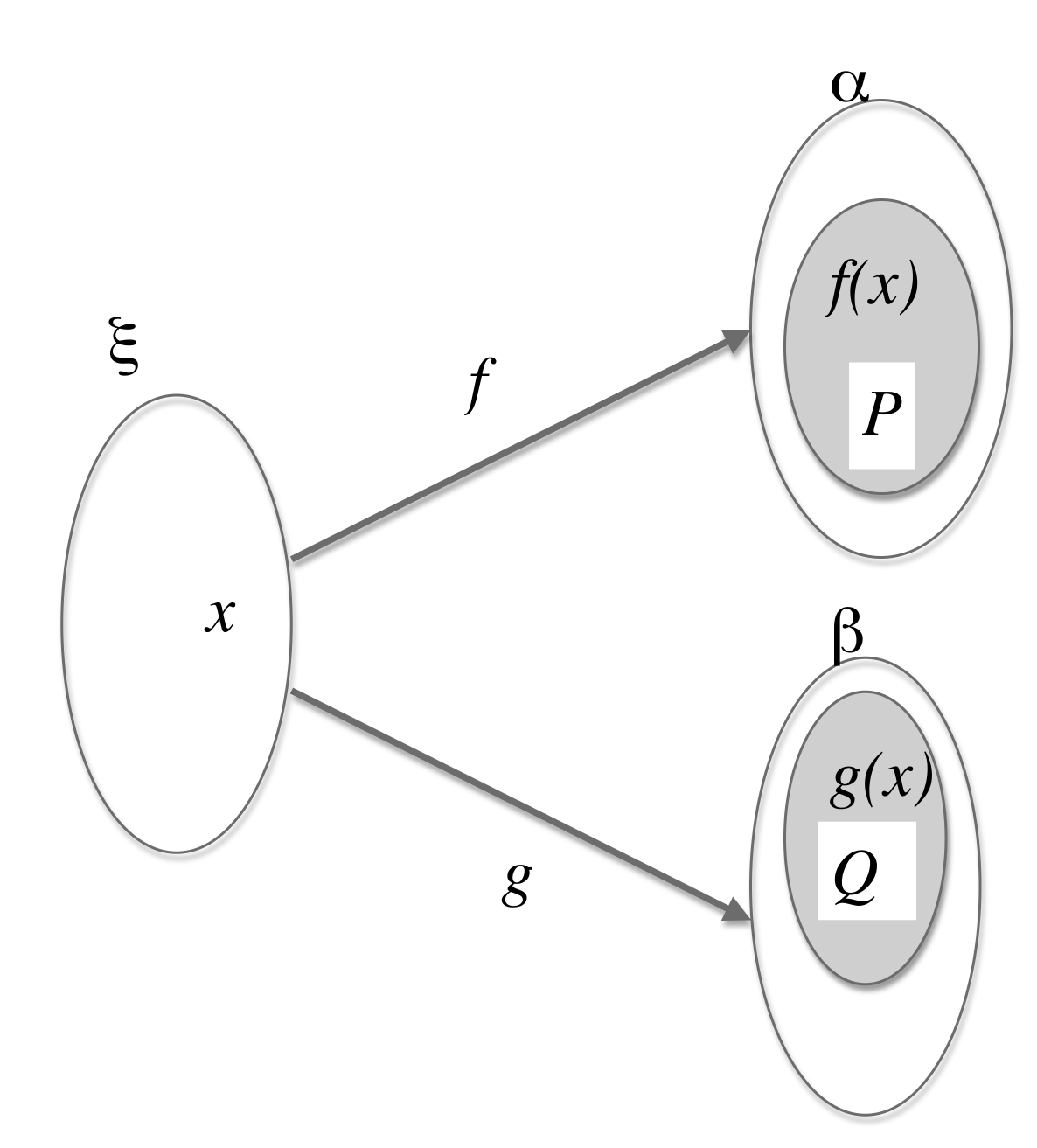} 
\end{center} 
\caption{Polymorphic conjunction: $P(f(x))\& Q(g(x))$ with $x:\xi$, $f:\xi\fl\alpha$, $g:\xi\fl\beta$.}  
\label{polyandfig}
\end{figure}

This can apply to say, a \ma{book},   that can be \ma{heavy}  
as a \ma{physical object}, and \ma{interesting} as an \ma{informational content} 
--- the limitation of possible over generation is handled by the \emph{rigid} use of possible transformations, including identity to be defined thereafter. 

\subsection{Organisation of the lexicon and rules for meaning assembly}
\label{orga} 

The lexicon associate each word $w$ with a \emph{principal $\lambda$-term} $[w]$ which basically is the Montague term reminded earlier, except that the types appearing in $[w]$ belong to a much richer typed system. In particular, the numerous base types can impose some selectional restriction. In addition to this principal term, there can be \emph{optional $\lambda$-terms} also called \emph{modifiers} or \emph{transformations} to allow, in some cases, composition that were initially ruled out by selectional restriction. 

There are two ways to solve a type conflict using those modifiers. \emph{Flexible modifiers} can be used without any restriction. \emph{Rigid modifiers} turn the type, or the sense of a word,  into another one which is \emph{incompatible}  with other types or senses. For a technical reason, the identity which is always a licit modifier is also specified to be flexible or rigid. In this later rigid case, it means that the original sense is incompatible with any other sense, although two other senses may be compatible. 
Consequently, every modifier, i.e. optional $\lambda$-term is declared, in the lexicon, to be either a rigid modifier, noted  $\rig$ or a flexible one, noted $\flex$. More subtle compatibility relations between senses can be represented by using the linear version of system \systF as we did in \cite{MR2013nlcs}

\begin{figure} 
$$ 
\begin{array}{l|l|rl} 
\mbox{word} & \mbox{principal\ $\lambda$-term} & \multicolumn{1}{l}{\mbox{optional\ $\lambda$-terms}} & \mbox{rigid/flexible}\\ \hline 
book & \pred{B}:\eee\fl\ttt & Id_B:B\fl B &\flex \\ 
& & b_1:B\fl \phi &\flex \\ 
& & b_2:B\fl I & \flex\\ 
\hline 
town & \pred{T}:\eee\fl\ttt & Id_T:T\fl T &\flex \\ 
& & t_1:T\fl F&\rig \\ 
& & t_2:T\fl P &\flex\\ 
& & t_3:T\fl Pl &\flex\\ 
\hline 
Liverpool & liverpool^T & Id_T:T\fl T &\flex \\ 
& & t_1:T\fl F &\rig \\ 
& & t_2:T\fl P &\flex \\ 
& & t_3:T\fl Pl &\flex\\ 
\hline 
vast & vast:Pl\fl\ttt & \\ 
\hline 
voted & voted:P\fl\ttt& \\
\hline 
won & won:F\fl\ttt&\\  
\end{array}
$$
where the base types are defined as follows: 
\begin{center}
\begin{tabular}{ll}
$\phi$ & physical objects\\  
$B$ & book \\ 
\end{tabular} 
\quad 
\begin{tabular}{ll}
$I$ & information\\ 
$T$ & town \\ 
\end{tabular} 
\quad 
\begin{tabular}{ll}
$P$ & people \\ 
$Pl$ & place \\
\end{tabular} 
\end{center} 
\caption{A sample lexicon}
\label{lexicon} 
\end{figure}

The reader may be surprised that we repeat the morphisms in the lexical entries, rather than having general rules.  For instance, one could also consider morphisms that are not anchored in a particular entry: in particular, they could implement the ontology at work in \cite{Pus95} as the type-driven approach of Asher does \cite{Asher2011wow}. 
For instance, a place (type $Pl$) could be viewed as a physical object (type $\phi$) with a general morphism $P2\phi$ turning places into physical objects that can be \ma{vast}. We are not fully enthusiastic about a general use of such rules since it is hard to tell whether they are flexible or rigid. As they can be composed they might lead to incorrect copredications, while their repetition inside each entry offers a better control of incorrect and correct copredications. One can think that some meaning transfer differs although the words have the same type. 
An example of such a situation in French is provided the words \ma{classe} and \ma{promotion}, which both refer to groups of pupils. 
The first word  \ma{classe}  (English: \ma{class}) can be coerced into the room where the pupils are taught, (the \ma{classroom}), while the second, \ma{promotion}  (English: \ma{class} or \ma{promotion}) cannot. 

There nevertheless exist ontological inclusions that are better represented by rules on types, like \ma{car} that are \ma{vehicles} that are \ma{artefacts}. This is the reason why we also allow for optional terms that are available for all words of the same type. This is done by \emph{subtyping} and more precisely by the notion of coercive subtlyping that is introduced in section \ref{subtyping}.

\section{A proper account of meaning transfers} 

In this section we shall see that the lexicon we propose, provides a proper account of the lexical phenomena that motivated its definition:  ill typed readings are rejected, coerced readings are handled, felicitous copretication are analysed while infelicitous ones are rejected. Some particular case of coerced readings are given a finer analysis as the polysemy of deverbals (nouns derived verbs, like \ma{construction}), or fictive motion. Finally we introduce coercive subtyping for system \systF which handles general coercions corresponding to ontological inclusion. 

\subsection{Coercions and copredication} 

One can foresee what is going to happen, using the lexicon given in figure \ref{lexicon} with sentences like: 
\begin{exe}
\ex \label{ll} 
Liverpool is vast. 
\ex \label{llv} 
Liverpool is vast and voted (last Sunday). 
\ex \label{lvw}    \#
Liverpool voted and won (last Sunday). 
\end{exe}

Our purpose is not discuss whether this or that sentence is correct, nor whether this or that copredication is felicitous, but to provide a formal and computational model which given sentences that are assumed to be correct, derives the correct readings, and which given sentences that are said to be incorrect, fails to provide a reading. 


\begin{itemize}
\item[Ex. \ref{ll}]
This sentence leads to a type mismatch $vast^{Pl\fl\ttt}(Liverpool^T))$, since \ma{vast} applies to \ma{places} (type $Pl$) and not to \ma{towns} as \ma{Liverpool}. 
It  is solved using the optional term $t_3^{T\fl Pl}$ provided by the entry for \ma{Liverpool}, which turns a town ($T$) into a place ($Pl$) 
$vast^{Pl\fl\ttt}(t_3^{T\fl Pl} Liverpool^T))$ --- a single optional term is used, the \flex / \rig difference is useless. 
\item[Ex. \ref{llv}] 
In the second example, the fact that Liverpool is vast is derived as previously, and the fact Liverpool voted is obtained from the transformation of the town into people, that can vote. The two can be conjoined by the polymorphic \ma{and} defined above 
as  term \ref{polyandterm} ($\Land$) 
because these transformations are flexible: one can use one and the other. 
We can make this precise using only the rules of the type calculus. 
The syntax yields the predicate $(\Land (is\_vast)^{Pl\fl \ttt} (voted)^{P\fl\ttt})$ and consequently 
the type variables should be instantiated by $\alpha:=Pl$ and $\beta:=P$ and the exact term is 
$\Land \{Pl\} \{P\} (is\_vast)^{Pl\fl \ttt} (voted)^{P\fl\ttt}$ which reduces to:  
$ \Lambda \xi \lambda x^\xi  \ 
 \lambda f^{\xi\fl\alpha} \lambda g^{\xi\fl\beta}  
(\textrm{and}^{\ttt\fl\ttt)\fl\ttt} \ (is\_vast \ (f \ x)) (voted \ (g \  x)))$. 

Syntax also says this term is applied to \ma{Liverpool}. 
which forces the instantiation $\xi:=T$ and the term corresponding to the sentence is after some reduction steps,\\  
$ \lambda f^{T\fl Pl} \lambda g^{T\fl P}  
(\textrm{and} \ (is\_vast \ (f \ Liverpool^T)) (voted \ (g \  Liverpool^T))))$. Fortunately the optional $\lambda$-terms 
$t_2:T\fl P$ and  $t_3:T\fl Pl$ are provided by the lexicon, and they can both be used, since none of them is rigid.
Thus we obtain, as expected\\  
$(\textrm{and} \ (is\_vast{Pl\fl\ttt} \ (t_3^{T\fl Pl} \ Liverpool^T)) (voted^{Pl\fl\ttt} \ (t_2^{T\fl P} \  Liverpool^T)))$ 
\item[Ex. \ref{lvw}]
The third example is rejected as expected. Indeed, the transformation of the town into a football club prevents any other transformation (even the identity) to be used in the polymorphic and that we defined above. We obtain the same term as above, with $won$ instead of $is\_vast$. 
The term is: $ \lambda f^{T\fl Pl} \lambda g^{T\fl P}  
(\textrm{and} \ (won \ (f \ Liverpool^T)) (voted \ (g \  Liverpool^T))))$ 
and the lexicon provides the two morphisms that would solve the type conflict, but one of them is rigid, i.e. we can solely use  this one. Consequently the sentence is semantically invalid. 
\end{itemize} 

\subsection{Fictive motion}
A rather innovative extension is to apply this technique to what Talmy called \emph{fictive motion} \cite{talmy99fictive}. Under certain circumstances, a path may introduce a virtual traveller following the path, as in sentences like: 
\begin{exe}
\ex Path GR3 descends for two hours.
\end{exe} 
Because of the duration, one cannot consider that the vertical coordinate decreases as the curvilinear abscissa increases.   
One ought to consider someone who follows the road. We model this by one morphism associated with the \ma{Path GR3} and one with \ma{descends}. The  first coercion turns the \ma{Path GR3} from an immobile object into an object of type  \ma{path} that can be followed and the second one coerce \ma{descends} into a verb that acts upon a \ma{path} object and introduce an individual following the path downwards --- this individual, which does not need to exist,  is  quantified, yielding a proposition that can be paraphrased as \ma{any individual following the path goes downwards for two hours}. 
\cite{MPR2011taln,MPR2011cid}

\subsection{Deverbals} 

Deverbals are nouns that correspond to action verbs, as \ma{building} or \ma{signature}. Usually they are ambiguous between result and process. We showed that our idiosyncratic model is well adapted since their possible senses vary from one deverbal to another, even if the verbs are similar and the suffix is the same. 

\begin{exe}
\ex The building took three months. 
\ex The building was painted white. 
\ex * The building that took three months was painted white. 
\ex The signature was illegible. 
\ex The signature took three months. 
\ex * Although it took three months the signature was illegible. 
\ex Although it took one minute, the signature was illegible. 
\end{exe}

We showed that a systematical treatment of deverbal meaning as the one proposed by the type-driven approach does not properly account for the data. Indeed, the possible meanings of a deverbal are more diverse than result and event, and there are no known rules to make sure the deverbal refers to the event. Consequently, words must include in the lexical  informations such at the possible meanings of the deverbal. These meanings can be derived from the event expressed by the verb, they usually include the event itself (but not always),  the result (but not always), 
and other meanings as well like the place where the event happens (e.g. English noun \ma{pasture}). 
This lexical information can be encoded in our framework, with one principal meaning and optional terms for accessing other senses and the flexibility or rigidity of these optional terms --- they are usually ridig, and copredication on the different senses of a deverbal is generally infelicitous. 
W successfully applied our framework and treatment to the semantic of deverbals to the  of restrictions of selection (both for the deverbal and for the predicate that may apply to the deverbal) to meaning transfers,  and to the  felicity  of  copredications on different senses of a deverbal. 
\cite{RealCoelhoRetore2013unilog,RealRetore2013jolli}

\subsection{Coercive subtyping and ontological inclusions}
\label{subtyping} 

As we said earlier on, ontological inclusions like \ma{Human beings are animals.},  would be better modelled by optional terms that are available for any word of the type, instead of anchoring them in words and repeating these terms for every word of this type. The model we described can take these subtyping inclusions into account as standard coercions, by specifying that a word like \ma{human being} introduces a transformation into an \ma{animal}. But this is somehow heavy, since one should also say that \ma{human beings} are  \ma{living beings} etc. 
Any predicate,  that applies to a class, also applies to an ontologically smaller class. For instance, 
 \ma{run} that applies to \ma{animals} also applies to  \ma{human beings}, because the \ma{human} is a subtype of \ma{animals}. These subtype coercions looks type driven, and, consequently, would be more faithfully modelled with a proper notion of subtyping.

Coercive subtyping,  introduced by Luo and Soloviev\cite{LuoSolovievXue2013ic,Soloviev00coercioncompletion}  for variants of Martin-L\"of type theory, 
corresponds quite well to these particular transformations. It starts with a transitive and acyclic set of coercions between base types, with at most  one coercion between any two base types, and ontological inclusions fulfil this condition. 
Indeed, such ontological inclusions when viewed as functions always are the identity on objects, hence there cannot be two different manners to map them in the larger type. 
Furthermore, other notions of subtyping that have been studied for higher order type theories are  very complicated with tricky restriction on the subtyping rules. \cite{Cardelli94Fsubtyping,LongoMilstedSoloviev2000lc}

Coercive subtyping, noted $A_0<A$, can be viewed as a short  hand for allowing a predicate or a function which applies to $A$-objects to apply to an argument whose type $A_0$ is not the expected type $A$ but a subtype $A_0$ of $A$. 
Hence coercive application is exactly what we were looking for: 

\begin{center}
coercive application \medskip 

\begin{prooftree} 
f:A\fl B \quad u:A_0 \quad A_0< A
\justifies 
(f\ a): B
\end{prooftree} 
\end{center}

The subtyping judgements, which have the structure of categorical combinators, are derived with very natural rules given in figure \ref{subtypingrules}. 
These rules simply encode transitivity, covariance and contravariance of implicative types (arrow types), and quantification over type variables. 

\begin{figure} 
\begin{center} 
\dotfill 

\medskip 

transitivity   \medskip 

\begin{prooftree} 
A<B\qquad B<C
\justifies 
A<C 
\end{prooftree}

\medskip 

\dotfill 

\medskip

covariance and contravariance of implication  \medskip 

\begin{prooftree} 
A<B\qquad C<D
\justifies 
D\fl A < C\fl B
\end{prooftree} 
\hfill
\begin{prooftree} 
A<B
\justifies 
T\fl A < T\fl B
\end{prooftree} 
\hfill 
\begin{prooftree} 
A<B
\justifies 
B\fl T <  A \fl T 
\end{prooftree}  

\medskip 

\dotfill 

\medskip 

quantification over types \medskip

\begin{prooftree} 
U<T[X]
\justifies 
U < \Pi X. T[X]
\using X\ not\ free\ in\ U
\end{prooftree} 
\hfill 
\begin{prooftree} 
U < \Pi X. T[X]
\justifies 
U<T[W]
\end{prooftree} 
\medskip 

\dotfill

\end{center}
\caption{Rules for coercive subtyping in system \systF}  
\label{subtypingrules}
\end{figure}

It should be observed that, given constants 
$\cij$
representing the coercions from $\eee_i$ to $\eee_j$, 
any coercion derivable coercion $T<U$ can be depicted by a linear $\Lambda$-term $m:U$  
of system $F$ or \ltyn 
with a single occurrence of the free variable $x:T$ and occurrences of the constants $\cij$. 
The construction of the term according to the derivation rules is defined as follows:

\begin{itemize} 
\item 
transitivity   \medskip 

\begin{prooftree} 
x:A< t:B\qquad y:B< u:C
\justifies 
x:A <  u[y:=t]:C
\end{prooftree} 

\bigskip 

\item 
covariance and contravariance of implication  \medskip 
\begin{itemize} 
\item 
\begin{prooftree} 
x:A< t:B\qquad z:C< u:D
\justifies 
f:D\fl A < \lambda z^C t[x:=f(u)]: C\fl B
\end{prooftree} 

\bigskip

\item 
\begin{prooftree} 
x:A<t:B
\justifies 
f:T\fl A < \lambda w^T t[x:=f(w)]:T\fl B
\end{prooftree} 

\bigskip

\item 
\begin{prooftree} 
x:A<t:B
\justifies 
g:B\fl T <  \lambda x^A. g(t): A \fl T 
\end{prooftree}  

\bigskip

\end{itemize} 
\item 
quantification over types \medskip 
\begin{itemize} 
\item 
\begin{prooftree} 
u:U<t:T[X]
\justifies 
u:U < \Lambda X. t:\Pi X. T[X]
\using X\ not\ free\ in\ U
\end{prooftree} 

\bigskip

\item 
\begin{prooftree} 
u:U < t:\Pi X. T[X]
\justifies 
u:U<t\{W\}:T[W]
\end{prooftree} 

\bigskip

\end{itemize}
\end{itemize} 

As an easy induction shows that: 

\begin{proposition} 
All terms derived in this system are linear, with a single occurrence of a single free variable (whose type is on the left of \ma{<}). 
\end{proposition}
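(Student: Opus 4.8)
The plan is to argue by structural induction on the derivation of the subtyping judgement $T<U$, equivalently on the construction of the associated term. The invariant I would carry through the induction is exactly the statement to be proved, made precise: for every judgement $T<U$ the constructed term $m$ has exactly one free variable, that variable has type $T$ (the type standing on the left of $<$), and it occurs exactly once in $m$. For the base case I would take the atomic coercions $\eee_i<\eee_j$, whose term is $\cij\, x$ with $x:\eee_i$; this contains the single free variable $x$ of type $\eee_i$ occurring exactly once, so the invariant holds at the leaves.

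For the inductive step I would inspect each term-formation rule in turn, applying the induction hypothesis to the premises and reading off the free variables of the conclusion. In the \emph{transitivity} rule the conclusion is $u[y:=t]$, where by hypothesis $t$ has its unique free variable $x:A$ occurring once and $u$ has its unique free variable $y:B$ occurring once; replacing the single occurrence of $y$ by $t$ therefore yields a term in which $x$ occurs exactly once and no other free variable survives, with left type $A$ as required. In the two \emph{covariant} implication rules the conclusion binds a fresh variable ($z$ or $w$) by a $\lambda$ and substitutes $f(u)$ (resp. $f(w)$) for the unique occurrence of $x$ in $t$; since the newly introduced variable is then bound and $f$ occurs once, the only surviving free variable is $f$, occurring once, with the expected left type $D\fl A$ (resp. $T\fl A$).

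The \emph{contravariant} rule is the single case in which the free variable itself changes: here the original free variable $x:A$ becomes bound by $\lambda x^A$, while the fresh variable $g:B\fl T$ is introduced and occurs once in $g(t)$, so the conclusion $\lambda x^A.\, g(t)$ is again linear, now with free variable $g$ of the correct left type $B\fl T$. For the two \emph{quantification} rules I would simply note that adding a type abstraction $\Lambda X$ or a type application $\{W\}$ introduces or removes no term-level variable occurrence, so the single free variable $u:U$ of the premise is preserved with unchanged left type $U$; the side condition that $X$ is not free in $U$ is what makes the type abstraction well formed but plays no role in linearity.

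The main point to be careful about—rather than a genuine obstacle, since the proposition is flagged as an easy induction—is the bookkeeping of which variable remains free after each substitution, and in particular the contravariance case, where the invariant's free variable is rebound and a new one takes its place. Once one observes that every substitution in the system replaces a \emph{single} occurrence of a variable by a term that itself carries a \emph{single} occurrence of its own free variable, the single-occurrence property is manifestly stable under composition, and the induction closes.
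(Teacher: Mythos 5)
Your proof is correct and follows exactly the route the paper intends: the paper merely remarks that the proposition follows ``as an easy induction shows,'' and your structural induction on the derivation---checking the base coercions $\coercion{i}{j}\,x$, the single-occurrence substitutions in transitivity and the implication rules, the rebinding of the free variable in the contravariant case, and the inertness of $\Lambda X$ and $\{W\}$ at the term-variable level---is precisely that induction, spelled out with the right invariant (one free variable, of the left-hand type, occurring once). Nothing is missing; your treatment of the contravariance case and the remark on the side condition for $\Lambda X$ are accurate.
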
 

From this one easily concludes that: 

\begin{proposition} 
Not all $\Lambda$-terms of system \systF can be derived in the subtyping   system. 
\end{proposition}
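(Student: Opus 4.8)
The plan is to obtain the statement as an immediate corollary of the preceding proposition, which is the one carrying all the content. That proposition asserts that every $\Lambda$-term $m$ appearing on the right of a derivable coercion judgement $T < m : U$ is \emph{linear}, containing exactly one free variable and that variable occurring exactly once. Consequently the set of $\Lambda$-terms produced by the subtyping calculus is contained in the set of such linear terms, so to prove that not every $\Lambda$-term of system \systF is derivable it suffices to exhibit one well-typed term that violates this linearity invariant.

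First I would pin down the invariant exactly as stated and then produce a witness in its complement. The cleanest witness is the polymorphic identity $\Lambda X.\ \lambda x^{X}.\ x$ of type $\Pi X.\ (X\fl X)$: it is a perfectly legitimate closed term of system \systF, yet it has \emph{no} free variable at all and therefore cannot match a derivable term, which must carry exactly one. A second family of witnesses comes from duplication: a term such as $\lambda f^{\eee_1\fl\eee_1\fl\ttt}.\ ((f\ x^{\eee_1})\ x^{\eee_1})$ is typable in \systF but has its single free variable $x$ occurring twice, again contradicting linearity. Either one suffices, and together they show that the inclusion of derivable terms into $\Lambda$-terms is strict.

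There is essentially no obstacle here: the whole weight rests on the already-established linearity proposition, and the phrase ``From this one easily concludes'' is accurate. The only point needing a line of care is to confirm that the chosen witness is a genuine typable $\Lambda$-term of \systF --- not a mere raw expression --- so that the counterexample is honest; for the closed identity this check is immediate. If a quantitative strengthening were wanted, I would note that in fact \emph{all} closed terms, all terms with two or more free variables, and all terms duplicating a free variable lie outside the range of the subtyping calculus, so the excluded terms form an infinite set.
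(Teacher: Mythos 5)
Your proposal is correct and follows exactly the paper's own route: the paper derives this proposition as an immediate corollary of the preceding linearity result (``From this one easily concludes''), which is precisely your argument. Your explicit witnesses (the closed polymorphic identity $\Lambda X.\ \lambda x^{X}.\ x$ and a variable-duplicating term) merely flesh out the step the paper leaves implicit, and both are valid.
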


Any derivation $c$ of $\eee_i < \eee_j$ is equivalent to a coercion $\cij$, i.e. our derivation system does not introduce new coercions between atomic types. This kind of result is similar to coherence in categories: 
given a compositional graph  $G$, the free cartesian categories over $G$ does not contain any extra morphism between object from the compositional graph. 
Here is the precise formulation of this coherence result: 

\begin{proposition} 
Given a $\eee_i<\eee_j$-derivation whose associated $\Lambda$-term is $\tilde{\mathcal{C}}$, 
the normal form $\mathcal{C}$ of $\tilde{\mathcal{C}}$ is a compound of $\cij$ applied to $x:\eee_i$, which, because of the assumptions on the $\cij$ is some $\chk$. 
\end{proposition}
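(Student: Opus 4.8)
The plan is to reduce the claim to a structural analysis of $\beta\eta$-normal $\Lambda$-terms of \emph{atomic} type. First I would invoke the preceding linearity proposition: the term $\tilde{\mathcal C}$ read off an $\eee_i<\eee_j$-derivation is linear, with exactly one free variable $x:\eee_i$ together with occurrences of the coercion constants $\coercion{k}{l}$. Since reduction in system \systF is strongly normalising and confluent (as recalled after the definition of the calculus), $\tilde{\mathcal C}$ has a unique normal form $\mathcal C$; by type preservation under reduction, $\mathcal C$ is again of type $\eee_j$, linear in the single free variable $x:\eee_i$, and built solely from the constants $\coercion{k}{l}$.

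Next I would pin down the shape of $\mathcal C$ using Proposition~\ref{head}. As the type $\eee_j$ is atomic --- neither an arrow nor a $\Pi$-type --- the normal form carries no leading $\lambda$- or $\Lambda$-abstraction, so it is a bare head application. Its head is either $x$ or a coercion constant, these being the only available items, and since none of them is polymorphic no type application $\{W\}$ can occur at the head. The head cannot be $x$, whose atomic type $\eee_i$ would force $i=j$, a one-step cycle excluded by acyclicity; hence the head is some $\coercion{k}{j}:\eee_k\fl\eee_j$ whose target matches $\eee_j$, applied to exactly one argument of the atomic type $\eee_k$. That argument is once more a normal, linear term of atomic type over the $\coercion{k}{l}$, so the analysis recurses and, by finiteness, bottoms out at the unique occurrence of $x$. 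Thus
$$\mathcal C \;=\; \big(\coercion{k_m}{j}\circ\cdots\circ\coercion{i}{k_1}\big)(x),$$
the composite of coercion constants along a path $\eee_i\to\eee_{k_1}\to\cdots\to\eee_j$ in the coercion graph.

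It then remains to collapse this composite using the standing hypotheses on the coercion set from Section~\ref{subtyping}: the base coercions form a transitive, acyclic family with at most one coercion between any two base types, and, being ontological inclusions that are identities on objects, any two of them with equal source and target coincide. Transitivity guarantees that the direct coercion $\coercion{i}{j}$ exists, and uniqueness forces the path-composite displayed above to equal that single constant (the generic $\chk$ of the statement). This is precisely the coherence phenomenon alluded to in the text, whereby the free cartesian category over a composition graph adds no new morphisms between the generating objects.

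I expect the genuine obstacle to be this final collapsing step rather than the syntactic analysis. Excluding abstractions and type applications from $\mathcal C$ is a routine consequence of Proposition~\ref{head} once atomicity of the target is exploited; but turning ``a chain of coercions along a path'' into ``the single coercion $\coercion{i}{j}$'' requires reading the coherence hypothesis as an actual equation $\coercion{b}{c}\circ\coercion{a}{b}=\coercion{a}{c}$ between terms (equivalently, equality of the identity-on-objects maps they denote), and checking that acyclicity rules out loops while uniqueness rules out parallel coercions, so that no two admissible paths can produce distinct composites. That is the step I would develop with the most care.
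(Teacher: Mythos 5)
Your proof is correct and follows essentially the same route as the paper: linearity of the associated term, the head-normal-form structure of Proposition~1, atomicity of $\eee_j$ to rule out leading abstractions and type applications, case analysis on the head (the variable $x$ or a constant $\ckl$) with recursion on the single argument to exhibit $\mathcal{C}$ as a chain of coercions applied to $x$, and finally collapsing that chain via the assumed equations $\ckj \circ \cij = \cik$ together with uniqueness of base coercions. The only (cosmetic) deviation is that the paper does not exclude the head-equals-$x$ case by acyclicity but simply absorbs it into its inductive set $C_i$ (where the normal form is then $x$ itself, i.e.\ $\eee_i=\eee_j$), which makes its induction slightly more self-contained than your appeal to acyclicity of the base coercion graph.
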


\begin{proof} 
As seen above, a deduction of $T<U$ clearly corresponds to a linear $\Lambda$-terms of system \systF, whose only free variable is $x:T$ with the $\cij$ as constants. Hence it has a normal from which also 
has a single free variable is $x:T$ and the $\cij$ as constants.

Let us show that any normal $\Lambda$-term $\mathcal{C}$ of type $\eee_j$ with a single free variable $x:\eee_i$
and constants $\cij:\eee_i\fl\eee_j$ is a compound of $\cij$ applied to $x^{\eee_i}$, i.e. a term of $C_i$:
\begin{itemize} 
\item 
$x^{\eee_i}\in C_i$
\item 
if $c^{\eee_j}\in C_i$ then $(\cjk(c))^{\eee_k}\in C_i$
\end{itemize} 

We proceed by induction on the number of occurrences of variable and constants in the normal term $\mathcal{C}$, whose from is, as said in proposition \ref{head}: 

\begin{center} 
\begin{tabular}{crcl} 
&sequence\ of\ & head\ & sequence\ of\ $\{\cdots\}$ and $(\cdots)$\ applications\\ 
&$\lambda$ and\ $\Lambda$ abstractions  &
variable &
to\ types\ $W_k$ and\ normal\ terms\ $t_l^{X_l}$\\ 
$\mathcal{C}\ =$&$\overbrace{(\ \lambda x_i^{X_i}\ |\ \Lambda X_j\ )^*}$ & 
$\overbrace{h^{(\Pi X_k |X_l\fl)^* Z}}$ &
$\overbrace{(\ \{W_k\}\ |\ t_l^{X_l})^*\ )}$ 
\end{tabular}
\end{center} 

If the term $\mathcal{C}$ corresponds to a proof of $\eee_i<\eee_j$ 
there is no $(\ \lambda x_i^{X_i}\ |\ \Lambda X_j\ )$
in front, because the $\eee_j$  is neither of the form  $U\fl V$ nor of the form $\Pi X.\ T[X]$. 
What may be the head variable? It is either the only free variable of this term, namely $x^\eee_i$, or a constant i.e. a $\ckl$. 
\begin{itemize} 
\item 
If the head variable is $x^\eee_i$ then, because of its type, there is no application to a type or to a normal term $(\ \{W_k\}\ |\ t_l^{X_l})^*\ )$ arguments, hence $\eee_i=\eee_j$ and the normal form is $x^{\eee_i}$, which is in $C_i$ 
\item 
If the head variable is is some $\ckl$, which because of its type, 
may only be applied to a normal term $t_l^{X_l}$ of type $\eee_k$. 
This normal term is a normal term of type $\eee_k$ with $x_{\eee_i}$ as its single free variable and the constants $\cjl$. As $t_l^{X_l}$  has one symbol less than $\mathcal{C}$, 
we can conclude that $t_l^{X_l}$ is in $C_i$ hence $\mathcal{C}\in C_i$. 
\end{itemize} 

Hence in any case the normal form $\mathcal{C}:\eee_j$ 
of the term  $\tilde{\mathcal{C}}:\eee_j$ is in $C_i$. 

Now, given that the coercions $\cij$ enjoys $\ckj \circ \cij= \cik$ (as part of our condition on base coercions)  
it is easily seen  that the only term of type $\eee_j$ in $C_i$ is $\cij$. 
\end{proof} 
 
We think that this coherence result can be improved by showing that there is at most one normal term corresponding to a derivation $S<T$, although the proof is likely to use some variant of reducibility candidates.

\subparagraph{An alternative} 

The rules for coercive sub tying follow a natural deduction style, as lambda terms of system \systF. Nevertheless, an alternative formulation of the quantifier elimination rule which requires to have identity axioms (whose term is identity)  to derive obvious sub tying relations. 

\begin{center} 
alternative quantifier elimination rule (sequent calculus style) 
\medskip 

\begin{prooftree} 
s:S[T] < t:U
\justifies 
\dot s:\Pi X. S[X] <  t[s:=\dot s\{T\}]
\end{prooftree} 
\end{center} 

\section{Compositional semantics issues: determiners, quantifiers, plurals} 
\label{determiner} 

So far we focused on phenomena in \emph{lexical semantics} that are usually left out of standard models but properly 
mastered by our model. But we must also have a look at compositional semantics, that is a as the logical structure of a sentence, to see whether our model still properly analyses what standard compositional models do, and, possibly provide better analysis. 
Hopefully sentence structure are correctly analysed but furthermore our extended setting  is quite  appealing for 
some classical issues in \emph{formal semantics} like determiners and quantification, or  plurals, as we show in this section. 

\subsection{Determiners and quantifiers} 

The examples presented so far only involved proper names because the determiners and quantifiers are a bit more complex than in the usual montagovian setting, let us see how they work.  

In order to integrate lexical issues into compositional semantics which closely follows syntax, 
we should at least describe the behaviour of  determiners and quantifiers in our framework. 
We adopt the view of quantified, definite, and indefinite  noun phrases as \emph{individual terms} by using generic elements (or choice functions) as initiated by Russell and formalised by Hilbert, Ackerman and Bernays see e.g. \cite{HBvol2}
and adapted to linguistics by researchers like von Heusinger see e.g. \cite{EgliHeusinger1995,Heusinger1997,Heusinger2004}.

How do we adapt our model, in particular the typing, if instead of \ma{Liverpool} the examples used \ma{The town}, \ma{A town}, \ma{All towns},  or \ma{Most towns}? Indefinite determiners, quantifiers, generalised quantifiers,... usually are viewed as functions from two predicates to propositions, one expressing the restriction and the other the main predicate see e.g. 
\cite{PetersWesterstahl2006quantifiers}

As we said, and this is especially true in  a categorial setting as the one Moot implemented 
\cite{moot10grail}
the syntactic structure closely corresponds to the semantic structure. 
But the usual treatment of quantification that we saw in subsection \ref{Montague} infringe 
this correspondence: 

\begin{exe}
\ex \label{keith} 
\emph{sentence:} Keith played some Beatles song. 
\ex 
\emph{semantical structure:} (some (Beatles songs)) ($\lambda x$ Keith played $x$)
\ex 
\emph{syntactical structure:}  (Keith (played (some (Beatles\ song))))
\end{exe} 

Another criticism that applies to the usual treatment of quantifiers is the symmetry that it wrongly introduces between the main predicate and the class over which one quantifies. For instance, the two sentences below (\ref{epc},\ref{ecp}) usually have the same logical form (\ref{fepc}): 

\begin{exe}\label{politician} 
\ex \label{epc} Some politician are crooks.
\ex \label{ecp} ? Some crooks are politicians. 
\ex \label{fepc} $\exists x. politician(x) \& crook(x)$
\end{exe}

Hence, in accordance with syntax, we prefer to consider that a quantified noun phrase is by itself some individual --- a generic one which does not  refer 
to a precise individual nor to a collection of individuals. As \cite{Heusinger1997} we use a $\eta$ for indefinite determiners (whose interpretation picks up a new element)  and $\iota$ for definite noun phrases\footnote{Actually \cite{Heusinger1997} writes $\epsilon$ instead of $\iota$. We do not follow his notation  because we also use Hilbert's $\epsilon$ with its traditional meaning.} (whose interpretation picks up the most salient element). In fact both $\iota$ and $\eta$ correspond to Hilbert's $\epsilon$ 
it is only the interpretation of the two which differ. Although papers and even a book \cite{Leisenring1967epsilon} have been published on the topic, 
up to now results on these operators do not go beyond Hilbert, Ackerman and Bernays  
results in 
\cite{HBvol2}
and in particular there is not yet a sound interpretation 
that would match the natural proof theoretical rules given by Hilbert. 

and $\tau$, and others for generalised quantifiers. All those operators takes as arguments a predicate $P$ involving a free variable $x$
$P(x)$ and return a term. The $\iota$ term is written as the term $\iota x.\ P(x)$ in which the variable $x$ is bound --- the syntactical behaviour of the other generic elements introduced by $\epsilon,\tau,\eta,...$ is just the same. The main problem is to provide a proper typing of such operators which fits in our model. \footnote{Actually, we first provided a  type theoretical model,and then discovered earlier related work in untyped semantics, e.g. papers by Heusinger.} 

In a typed model, a predicate applying to $\alpha$-objects is of type $\alpha\fl\ttt$. 
Consequently $\iota$ should be of type: $(\alpha\fl\ttt)\fl\alpha$, and in order to have a single $\iota$ its type is $\Pi\alpha.\ (\alpha\fl\ttt)\fl\alpha$. Consequently, if we have a predicate \ma{Dog} of \ma{Animate}  entities the term $\iota(Dog)$ (written 
$\iota x.\ Dog(x)$ in untyped models) the semantics of \ma{the dog} is of type \ma{Animate}.... but we would like this term to be of type $Dog$ if \ma{dog} is a type,  or to enjoy the property $Dog$, if $Dog$ is a property. How do we say so, since the type $Dog$ does not appear in $\iota$? 
Indeed,  only \ma{animate} objects appear in $\iota$ as an instantiation of $\alpha$. We solve this by adding a systematic presupposition that can be called an axiom, $P(\iota(P))$ for any $P$ of type $\eee\fl\ttt$ \footnote{If the predicate $P$ corresponds to a type $\tau$ i.e.  
$P=\pred{\tau}$,   this presupposition is better written as 
$\iota(\pred{\tau}):\tau$.} 

The syntax of quantifiers and generalised quantifiers is defined in the same way. Existential quantification \ma{some} is faithfully represented by Hilbert's \emph{epsilon} operator: $P(\epsilon x P(x))\equiv \exists x.\  P(x)$. As soon as some element enjoys the property $P$, the term $\epsilon x.\ P(x)$ enjoys $P$ as well. 

The operator $\tau$ symmetrically constructs the generic element that appear in mathematical proofs like \ma{Let $x$ be any integer \ldots Thus for all integers \ldots} 
This universal generic represents universal quantification 
because $P(\tau x.\ P(x))\equiv \forall x.\ P(x)$: as soon as the term $\tau x.\ P(x)$ enjoys the property $P$ any element does. Actually,  the $\epsilon$ operator is enough, since $\tau x.\ P(x) = \epsilon x. (\lnot P(x))$ and $\epsilon x.\ P(x) =  \tau x. (\lnot P(x))$

As it is well known determiners --- at least some use of them --- correspond to quantifiers, and that's the way determiners are modelled in our framework, see e.g. \cite{Retore2013taln,Retore2013lc}. It avoids the problems evoked in examples \ref{keith} and \ref{ecp}. 

It should be observed that generics fit better into our typed and many sorted semantic representations. Indeed, intuitively it is easier to think of a generic \ma{politician} or \ma{song} than it is to think of a generic \ma{entity} or \ma{individual}.

One can even introduce constants that model generalised quantification. They are  typed just the same way, and this construct can be applied to compute the logical form of statement including the  \ma{most} 
quantifier, as exposed in \cite{Retore2012rlv}. It does not mean that we have the sound and complete proof rules nor a model theoretical  interpretation: we simply are able to automatically compute  logical forms from  sentences involving generalised quantifiers. 

\subsection{Individuals, plurals and sets in a type-theoretical framework} 

The organisation of the types also allows us to handle simple facts about plurals, 
 as shown in \cite{MootRetore2011coconat,MMR2013lenls} 
--- which resembles some Partee's ideas of \cite{partee2008np}. 
Here are some classical examples involving plurals, exemplifying some typical readings for plurals: 

\begin{exe} 
\ex. *Keith met.
\ex  Keith and John met. (unambiguous).
\ex *The student met. 
\ex  The students met. (unambiguous, one meeting)
\ex The committee met. (unambiguous, one meeting)
\ex The committees met. (ambiguous: one big meeting, one
meeting per committee, several meetings invoking several committees)
\ex The students wrote a paper. (unambiguous) 
\ex The students wrote three papers. (covering) 
\end{exe} 

Such readings are derivable in our model because one can define in \systF operators for handling plurals. 
Firstly, on can add, as a constant, a cardinality operator for predicates $||\_||:\Pi \alpha. (\alpha\fl \ttt)\fl \mathbb{N}$ (using the internal integers of system \systF which are $\mathbb{N}=\Pi X.\ (X\fl X)\fl(X\fl X)$, or predefined integers as in G\"odel system T or most type theories). 
Next, as shown in figure \ref{opplur} , we can have operators for handling plurals: $q$ (turning an individual into a property/set),  
$*$ (distributivity) $\#$ (restricted distributivity from sets of sets to its constituent subsets),  $c$ (for coverings)... The important fact is that the computation of such readings uses exactly the same mechanisms as lexical coercion. Some combinations are blocked by their types, but optional terms coming tier from the predicate or from the plural noun may allow an a priori prohibited reading. 
To be precise we also provided specific tools for handling groups that are singular nouns denoting a set.

\begin{figure} 
$\begin{array}{ll}
q & \Lambda \alpha \lambda x^\alpha \lambda y^\alpha x=y\\
{}* & \Lambda\alpha \lambda P^{\alpha\fl\ttt}\lambda Q^{\alpha\fl\ttt} \forall x^\alpha Q(x) \Rightarrow P(x)\\ 
\# & \Lambda\alpha \lambda R^{(\alpha\fl\ttt)\fl\ttt} \lambda S^{\alpha\fl\ttt)\fl\ttt}\forall P^{\alpha\fl\ttt} S(P) \Rightarrow R(P)\\ 
c & \Lambda\alpha \lambda R^{(\alpha\fl\ttt)\fl\ttt} \lambda P^{\alpha\fl\ttt} \forall x^\alpha P(x)\Rightarrow \exists Q^{\alpha\fl\ttt}
Q(x)\land (\forall y^\alpha Q(y)\Rightarrow P(y))\land R(Q)\\ 
\end{array}$
\caption{Operators for plurals} 
\label{opplur}
\end{figure}

\section{Comparison with related work and conclusion} 

\subsection{Variants and implementation} 
\label{variants} 

In the afore presented model, some points admit slight changes that do not affect the behaviour. 

As discussed in  the beginning of section \ref{ltyn} the base type can be discussed. We proposed to use  classifiers as base types of a language with classifiers, because classifiers are linguistically and  cognitively motivated classes of words and entities. But it is fairly possible that other sets of base types are better suited in particular for specific applications. \cite{MeryRetore2013nlpcs}

In relation to this issue, the inclusion between base types, 
that in our model are morphisms can be introduced with words  or as general axioms. 
We prefer the first solution which allows idiosyncratic behaviours, dependent on words
as explained  in paragraph \ref{orga} with \ma{classe} and \ma{promotion}. 
Nevertheless when dealing with ontological inclusions, or other very general coercions, 
we think a subtyping approach is possible and reduces the size of the lexicon,
this is why we are presently exploring coercive subtyping. 

The type we gave for predicate can also vary: it could be systematically $\eee\fl\ttt$, but as explained in  paragraph \ref{determiner},
types $u\fl\ttt$ are possible as well, and varying from one form to another is not complicated. 

An important variant is to define the very same ideas within a compositional model like $\lambda$-DRT \cite{musk:comb96} the compositional view of Discourse Representation Theory \cite{KR93} which can, as its name suggest, handle discursive phenomena. 
Thus one can integrate the semantical and lexical issues presented here into a broader perspective. This can be done, and in fact several applications of the model presented here are already included into the Grail parser by Richard Moot, in particular for French \cite{moot10grail}. The grammar is an automatically acquired grammar 
but unfortunately the refined semantic terms we need can only be typed by hand. 
Consequently we only tested the semantic analyses described herein on small or specific lexicon. 
For instance, four treatment of fictive motion (cf. subsection \ref{fictivemotion} has been tested with a detailed lexicon for spatial semantics, but with $\lambda$-DRT  \cite{MPR2011cid} rather than plain lambda calculus  \cite{MPR2011taln} .

\subsection{Comparison with related work} 

There are many similarities with the contemporary work by Asher and Luo described e.g. in 
\cite{AsherLuo2012sub,Luo2012lacl,ChatLuo2012cslp}. 

A first difference is the type system. Our type system, \systF,  is quite powerful but simple: four-term building operations, and two reduction rules. Luo make use of a version of Modern Type Theories (MTT), closed to the Unifying Theory of dependent Types (UTT), whose expressive power and computational complexity is difficult to compare: it is predicative but it include dependent types.  Hence it is not clear whether MTT better 
characterises the logic needed for meaning assembly. Quantification over type variable is quite comparable and allows $\forall \alpha:CN$ which is quite convenient although it can certainly be encoded within system \systF using the fact that finites sums can be defined in system \systF, hence $x:\alpha, \alpha:CN$ can be rephrased if there are finitely many $CN$  --- finite products can be fined as well. This is both a positive and negative feature of system \systF: it can encode many things, but encodings are often dull.  In addition, the MTT that Luo uses, includes dependent types, i.e. types defined from terms, which are convenient --- the way they are used so far can probably be encoded in system \systF, but encoding can be tedious. 
A possible solution, similar to \cite{Soloviev2003},  is too introduce predefined types \systF with specific reduction schemes
--- e.g. adding integers as in G\"odel's system T. 

Regarding coercions, Luo \cite{Luo2011lacl} makes an extensive use of coercive subtyping, that he introduced with Soloviev \cite{Soloviev00coercioncompletion}: as said in this paper this kind of subtyping may also work well with system \systF. So we can say that Luo system is very similar.  Dependent types, predicative quantification, may be closer to what we wish to model, 
but the formal diversity of the many rules may result in an opaque formalisation. 
The typed system at work in Asher's view \cite{Asher2011wow}  is a simple type theory extended with type constructs and operations from category theory. 
The theory extends cartesian closed category with a few of the many operations that one finds in a topos, like subtype. This approach is hardly compared with the two above, since it does not belong to the same family: morphisms do not represents (quotiented) proofs of some logic,
they are closer to a set theoretic interpretation. 

Another ingredient of  our models  are the base types. Asher leaves the set of base types open, but rather small(say a dozen) : $\eee, \ttt,$ physical object, etc., with a linguistically motivated subtyping relation $\sqsubset$ defined over these types. Luo, especially in his later article \cite{Luo2012lacl}, wants to equate base types with common nouns (also with coercions between them), and this is a possible compromise between any formula and the minimal base type system which makes it difficult to express some selectional restrictions with types. 
However it seems that they are too many of them, since not any common noun appears as a restriction of selection for another reword in a dictionary. Classifiers as base types is a recent proposal of ours which seems cognitively and linguistically motivated. It is worth exploring this hypothesis empirically in corpora and tests.

The subtyping relation between the base types are language independent in these two models, i.e. they are not triggered by words, but simply by types. We opted for a compromise in which only ontological inclusions are type driven, using coercive sub typing.

Regarding the general organisation of the lexicon and of its composition modes, the same difference applies. While according to Asher and Luo the types determine the coercions, in our approach the coercions are provided by the terms in the lexicon, i.e. by the words themselves and not by their types, with an exception for ontological inclusions.  The recent claim by Luo that base type should be common nouns (that are words) partly rubs out the differences between on one hand the  type driven approaches  of himself and Asher  and, on the other hand, ours which is more idiosyncratic being based on words and terms.  

Finally one may wonder whether we finally derive similar logical forms? They actually are quite similar: we derive higher order multi sorted logical formulae multi sorted, Asher derives formulae in an intuitionnistic set theory, which works with sorts, and Luo derives formulae of type theory. All these are more or less the same: higher order is possible although not extensively used in examples, and there are sorts or types. 

A possible difference may lie in the distance with syntax and the effective computability of the semantic representation, which requires a treatment of the current constructs in compositional semantics, like determiners, quantifiers, plurals,... and to be integrated in a general analyse also including phenomena like time or aspect. 
For the time being we did more on such issues than the others, but I am pretty sure that a similar treatment is possible within the approach  developed by Asher and Luo.  

\subsection{Perspectives} 

A part from fixing up the optimal variant among the possible variants of our model, to study and develop the convergence with related work, or to develop the implementation there are some questions both on type theory and on linguistic modelling, both theoretical and practical,  that deserve to be further studied. 

The \emph{acquisition} of the semantic lexicon has both theoretical and practical aspects. 
In particular, how could one acquire the optional lambda terms?  Syntactic informations on words can be automatically acquired, and Moot's parser that we used for experimenting our type theoretical semantic analyses was automatically acquired. \cite{moot10semi,Moot2007mit}
By now there are some techniques to acquire the usual semantic terms  of Montague semantics of \ref{Montague} that are associated with words and depicts their argument structure. \cite{Zettlemoyer2009acl}
Machine learning and serious games also apply to learn some relation between words see e.g. \cite{CimianoWenderoth2007ACL,LafourcadeJoubert2010imcsit}
But up to now there are no learning algorithms for acquiring a set of base type, nor for determining given a set of base type, the optional lambda terms, and our experiments with Moot parser were performed using hand typed semantic lexicon. 

On the logical side there are many intriguing questions. 
\begin{itemize} 
\item 
One is the relation in a type system with sorts between the (higher order) predicate calculus and the type system, exemplified by the relation between the relation between type  judgements $x:T$ that, as linguistic presuppositions,  cannot be denied and predicates $\widehat{T}(x)$ that can be denied. 
\item The Hilbert operator $\epsilon$ which look more natural in this typed system deserve to be further studied. Since most of the results are false but Hilbert's original results, the study of both the deductive system and the interpretation of those operators is appealing. 
We are especially intrigued by the formula with Hilbert operators that have no corresponding formula in usual logic. 
\item The coercive subtyping we introduced in this paper should also be further explored, e.g. by proving that there is at most one coercion between any two types. 
\item 
It is quite clear that we do not need the full power of system \systF: we chose this system of variable types and quantified types for its simplicity and elegance. Nevertheless  one may wonder whether a simple restriction that would be sufficient. Linear version of system \systF both have a lower complexity 
\cite{Lafont2004tcs}
and allow a finer grained treatment of the constraints on sense compatibility. \cite{MR2013nlcs}
\end{itemize} 

Regarding computational linguistics, and natural language processing application, the way the discourse context  
is handled, including the permanence and the propagation of constraints (e.g. on sense compatibilities) through linguistic structure. Observe that: 

\begin{exe} 
\ex 
This salmon was living nearby Scottish coast. It was delicious. 
\ex 
? This salmon that was living nearby Scottish coast was delicious. 
\ex 
* This salmon was living nearby Scottish coast and was delicious. 
\end{exe}

As a major challenge in the semantics of natural language on which this type theoretical and many sorted view might bring new lights is the semantics of mass nouns, like \emph{wine}, which can be quantified: 

\begin{exe} 
\ex He drank some wine. 
\ex He drank all the wine. 
\end{exe}

\subparagraph{Thanks} Special 
 thanks to Serge\"\i\ Soloviev for his explanations on coercive subtyping during my CNRS sabbatical at IRIT. 
Many  thanks to those I worked with on these questions R. Moot, , M. Abrusci, Ch. Bassac, B. Mery, L. Pr{\'e}vot   L. Real and to the ones I discussed with, namely   N. Asher, Z. Luo,  M. Abrusan, C. Beyssade, H. Burnett, S.-J. Conrad, F. Corblin, A. Mari, H. Person, F. del Prete.   

\bibliographystyle{plain}

\bibliography{bigbiblio} 

\begin{thebibliography}{10}

\bibitem{AbrusciRetoreCLMPS}
Vito~Michele Abrusci and Christian Retor{\'e}.
\newblock Quantification in ordinary language: from a critic of set-theoretic
  approaches to a proof-theoretic proposal.
\newblock In Peter Schr{\"o}der-Heister, editor, {\em 14th Congress of Logic,
  Methodology and Philosophy of Sciences}, 2011.

\bibitem{Asher2011wow}
Nicholas Asher.
\newblock {\em Lexical Meaning in context -- a web of words}.
\newblock Cambridge University press, 2011.

\bibitem{AsherLuo2012sub}
Nicholas Asher and Zhaohui Luo.
\newblock Formalization of coercions in lexical semantics.
\newblock In Emmanuel Chemla, Vincent Homer, and Gr{\'e}goire Winterstein,
  editors, {\em Sinn und Bedeutung 17}, pages 63--80, 2012.
\newblock \url{http://semanticsarchive.net/sub2012/}.

\bibitem{AP01}
Nicholas Asher and James Pustejovsky.
\newblock The metaphysics of words in contexts, 2000.

\bibitem{asher-typedriven}
Nicolas Asher.
\newblock A type driven theory of predication with complex types.
\newblock {\em Fundamenta Informaticae}, 84(2):151--183, 2008.

\bibitem{BMRjolli}
{C}hristian {B}assac, {B}runo {M}ery, and {C}hristian {R}etor{\'e}.
\newblock {T}owards a {T}ype-{T}heoretical {A}ccount of {L}exical {S}emantics.
\newblock {\em {J}ournal of {L}ogic {L}anguage and {I}nformation},
  19(2):229--245, April 2010.
\newblock \url{http://hal.inria.fr/inria-00408308/}.

\bibitem{LACL2012}
Denis B{\'e}chet and Alexander~Ja. Dikovsky, editors.
\newblock {\em Logical Aspects of Computational Linguistics - 7th International
  Conference, LACL 2012, Nantes, France, July 2-4, 2012. Proceedings}, volume
  7351 of {\em Lecture Notes in Computer Science}. Springer, 2012.

\bibitem{BenAviFrancez2004cg}
Gilad Ben-Avi and Nissim Francez.
\newblock Categorial grammars with ontology-refined types.
\newblock In {\em Categorial grammars -- an efficient tool for natural language
  processing}, pages 99--113, Montpellier, June 2004. C.N.R.S.

\bibitem{bierwisch1979}
Manfred Bierwisch.
\newblock W{\"o}rtliche bedeutung - eine pragmatische gretchenfrage.
\newblock In G.~Grewendorf, editor, {\em Sprechakttheorie und Semantik}, pages
  119--148. Surkamp, Frankfurt, 1979.

\bibitem{bierwisch1983}
Manfred Bierwisch.
\newblock Semantische und konzeptuelle repr{\"a}sentation lexikalischer
  einheiten.
\newblock In R.~R{\.{u}}{\u{z}}i{\u{c}}ka and W.~Motsch, editors, {\em
  Untersuchungen zur Semantik}, pages 61--99. Akademie-Verlag, Berlin, 1983.

\bibitem{blutner-pragma}
Reinhard Blutner.
\newblock Lexical semantics and pragmatics.
\newblock In Fritz Hamm and Thomas~Ede Zimmermann, editors, {\em Semantics},
  volume 10 (Sonderheft), pages 27--58, Hamburg, 2002. Buske.

\bibitem{Cardelli94Fsubtyping}
Luca Cardelli, Simone Martini, John~C. Mitchell, and Andre Scedrov.
\newblock An extension of system {F} with subtyping.
\newblock {\em Information and {C}omputation}, 109(1/2):4--56, 1994.

\bibitem{ChatLuo2012cslp}
Stergios Chatzikyriakidis and Zhaohui Luo.
\newblock An account of natural language coordination in type theory with
  coercive subtyping.
\newblock In Denys Duchier and Yannick Parmentier, editors, {\em 7th
  International Workshop on Constraint Solving and Language Processing
  (CSLP'12). Selected and Revised Papers.}, number 8114 in Lecture Notes in
  Computer Science. Springer, 2013.

\bibitem{ChatLuo2012FG}
Stergios Chatzikyriakidis and Zhaohui Luo.
\newblock Adjectives in a modern type-theoretical setting.
\newblock In Glyn Morrill and Mark-Jan Nederhof, editors, {\em FG}, volume 8036
  of {\em Lecture Notes in Computer Science}, pages 159--174. Springer, 2013.

\bibitem{CimianoWenderoth2007ACL}
Philipp Cimiano and Johanna Wenderoth.
\newblock Automatic acquisition of ranked qualia structures from the web.
\newblock In John~A. Carroll, Antal van~den Bosch, and Annie Zaenen, editors,
  {\em ACL}. The Association for Computational Linguistics, 2007.

\bibitem{cooper-codygeqlic}
Robin Cooper.
\newblock Copredication, dynamic generalized quantification and lexical
  innovation by coercion.
\newblock In {\em Fourth International Workshop on Generative Approaches to the
  Lexicon}. Universit{\'e} de Gen{\`e}ve, 2007.

\bibitem{Cooper2011lacl}
Robin Cooper.
\newblock Copredication, quantification and frames.
\newblock In Pogodalla and Prost \cite{LACL2011}, pages 64--79.

\bibitem{cruse1986lexical}
D.A. Cruse.
\newblock {\em Lexical semantics}.
\newblock Cambridge textbooks in linguistics. Cambridge University Press, 1986.

\bibitem{EgliHeusinger1995}
Urs Egli and Klaus von Heusinger.
\newblock The epsilon operator and {E}-type pronouns.
\newblock In Urs Egli, Peter~E. Pause, Christoph Schwarze, Arnim von Stechow,
  and G{\"o}tz Wienold, editors, {\em Lexical Knowledge in the Organization of
  Language}, pages 121--141. Benjamins, 1995.

\bibitem{Girard71}
Jean-Yves Girard.
\newblock Une extension de l'interpr{\'e}tation de {G}{\"o}del {\`a} l'analyse
  et son application: l'{\'e}limination des coupures dans l'analyse et la
  th{\'e}orie des types.
\newblock In Jens~Erik Fenstad, editor, {\em Proceedings of the Second
  Scandinavian Logic Symposium}, volume~63 of {\em Studies in Logic and the
  Foundations of Mathematics}, pages 63--92, Amsterdam, 1971. North Holland.

\bibitem{Girard2011blindspot}
Jean-Yves Girard.
\newblock {\em The blind spot -- lectures on logic}.
\newblock European Mathematical Society, 2011.

\bibitem{HBvol2}
David Hilbert and Paul Bernays.
\newblock {\em Grundlagen der Mathematik. Bd. 2.}
\newblock Springer, 1939.
\newblock Traduction fran{\c c}aise de F. Gaillard, E. Guillaume et M.
  Guillaume, L'Harmattan, 2001.

\bibitem{Huet76}
G{\'e}rard~P. Huet.
\newblock {\em R{\'e}solution d'{\'e}quations dans des langages d'ordre
  1,2,...,$\omega$}.
\newblock Th{\`e}se de doctorat d'{\'e}tat, Universit{\'e} Paris VII, 1976.

\bibitem{KR93}
Hans Kamp and Uwe Reyle.
\newblock {\em From Discourse to Logic}.
\newblock D. Reidel, Dordrecht, 1993.

\bibitem{Lafont2004tcs}
Yves Lafont.
\newblock Soft linear logic and polynomial time.
\newblock {\em Theoretical Computer Science}, 318(1--2):163 -- 180, 2004.
\newblock <ce:title>Implicit Computational Complexity</ce:title>.

\bibitem{LafourcadeJoubert2010imcsit}
Mathieu Lafourcade and Alain Joubert.
\newblock Computing trees of named word usages from a crowdsourced lexical
  network.
\newblock In {\em IMCSIT}, pages 439--446, 2010.

\bibitem{Lauer2004}
Sven Lauer.
\newblock A comparative study of current theories of polysemy in formal
  semantics.
\newblock Master's thesis, Cognitive science Osnabr{\"u}ck - Computational
  Linguistics, 2004.

\bibitem{LQ2012}
Alain Lecomte and Myriam Quatrini.
\newblock Figures of dialogue: a view from ludics.
\newblock {\em Synthese}, 183:59--85, 2011.

\bibitem{LMR2012cmlf}
Ana{\"\i}s Lefeuvre, Richard Moot, and Christian Retor{\'e}.
\newblock Traitement automatique d'un corpus de r{\'e}cits de voyages
  pyr{\'e}n{\'e}ens : analyse syntaxique, s{\'e}mantique et pragmatique dans le
  cadre de la th{\'e}orie des types.
\newblock In {\em Congr{\`e}s mondial de linguistique fran{\c c}aise}, 2012.

\bibitem{LMRS2012taln}
Ana{\"\i}s Lefeuvre, Richard Moot, Christian Retor{\'e}, and No{\'e}mie-Fleur
  Sandillon-Rezer.
\newblock Traitement automatique sur corpus de r{\'e}cits de voyages
  pyr{\'e}n{\'e}ens : Une analyse syntaxique, s{\'e}mantique et temporelle.
\newblock In {\em Traitement Automatique du Langage Naturel, TALN'2012},
  volume~2, pages 43--56, 2012.

\bibitem{Leisenring1967epsilon}
Albert~C. Leisenring.
\newblock {\em Mathematical logic and Hilbert's $\epsilon$ symbol}.
\newblock University Mathematical Series. Mac Donald \& Co., 1967.

\bibitem{LongoMilstedSoloviev2000lc}
Giuseppe Longo, Kathleen Milsted, and Sergei Soloviev.
\newblock Coherence and transitivity of subtyping as entailment.
\newblock {\em Journal of Logic and Computation}, 10(4):493--526, 2000.

\bibitem{Luo2011lacl}
Zhaohui Luo.
\newblock Contextual analysis of word meanings in type-theoretical semantics.
\newblock In Pogodalla and Prost \cite{LACL2011}, pages 159--174.

\bibitem{Luo2012lacl}
Zhaohui Luo.
\newblock Common nouns as types.
\newblock In B{\'e}chet and Dikovsky \cite{LACL2012}, pages 173--185.

\bibitem{LuoSolovievXue2013ic}
Zhaohui Luo, Sergei Soloviev, and Tao Xue.
\newblock Coercive subtyping: Theory and implementation.
\newblock {\em Inf. Comput.}, 223:18--42, 2013.

\bibitem{MMR2013lenls}
Bruno Mery, Richard Moot, and Christian Retor{\'e}.
\newblock Plurals: individuals and sets in a richly typed semantics.
\newblock In {\em Logic and Engineering of Natural Language Semantics 10 (LENLS
  10)}. LNCS, 2013.

\bibitem{MR2013nlcs}
Bruno Mery and Christian Retor{\'e}.
\newblock Advances in the logical representation of lexical semantics.
\newblock In Valeria de~Paiva and Larry Moss, editors, {\em Natural Language
  and Computer Science (LICS 2013 satellite workshop)}, New-Orleans, 2013.

\bibitem{MeryRetore2013nlpcs}
Bruno Mery and Christian Retor\'e.
\newblock Semantic types, lexical sorts and classifiers.
\newblock In B.~Sharp and M.~Zock, editors, {\em 10th International Workshop on
  Natural Language Processing and Cognitive Science}, Marseilles, September
  2013.

\bibitem{Moot2007mit}
Richard Moot.
\newblock Automated extraction of type-logical supertags from the spoken dutch
  corpus.
\newblock In Srinivas Bangalore and Aravind Joshi, editors, {\em The Complexity
  of Lexical Descriptions and its Relevance to Natural Language Processing: A
  Supertagging Approach}. MIT Press, 2007.

\bibitem{moot10semi}
Richard Moot.
\newblock Semi-automated extraction of a wide-coverage type-logical grammar for
  {French}.
\newblock In {\em Proceedings of Traitement Automatique des Langues Naturelles
  (TALN)}, Montreal, 2010.

\bibitem{moot10grail}
Richard Moot.
\newblock Wide-coverage {French} syntax and semantics using {Grail}.
\newblock In {\em Proceedings of Traitement Automatique des Langues Naturelles
  (TALN)}, Montreal, 2010.

\bibitem{MPR2011cid}
Richard Moot, Laurent Pr{\'e}vot, and Christian Retor{\'e}.
\newblock {A discursive analysis of itineraries in an historical and regional
  corpus of travels}.
\newblock In {\em {Constraints in discourse}}, page
  http://passage.inria.fr/cid2011/doku.php, Ayay-roches-rouges, France,
  September 2011.

\bibitem{MPR2011taln}
Richard Moot, Laurent Pr{\'e}vot, and Christian Retor{\'e}.
\newblock Un calcul de termes typ{\'e}s pour la pragmatique lexicale ---
  chemins et voyageurs fictifs dans un corpus de r{\'e}cits de voyages.
\newblock In {\em Traitement Automatique du Langage Naturel, TALN 2011}, pages
  161--166, Montpellier, France, June 2011.

\bibitem{MootRetore2011coconat}
Richard Moot and Christian Retor{\'e}.
\newblock Second order lambda calculus for meaning assembly: on the logical
  syntax of plurals.
\newblock In Reinhard Muskens, editor, {\em Coconat: Conference on Computing
  Natural Reasoning}. University of Tilburg, December 2011.
\newblock \url{http://hal.inria.fr/hal-00650644}.

\bibitem{MootRetore2012lcg}
Richard Moot and Christian Retor{\'e}.
\newblock {\em The logic of categorial grammars: a deductive account of natural
  language syntax and semantics}, volume 6850 of {\em LNCS}.
\newblock Springer, 2012.

\bibitem{Muskens91}
Reinhard Muskens.
\newblock Anaphora and the logic of change.
\newblock In Jan van Eijck, editor, {\em JELIA}, volume 478 of {\em Lecture
  Notes in Computer Science}, pages 412--427. Springer, 1990.

\bibitem{musk:comb96}
Reinhard Muskens.
\newblock Combining {M}ontague {S}emantics and {D}iscourse {R}epresentation.
\newblock {\em Linguistics and Philosophy}, 19:143--186, 1996.

\bibitem{nunberg-transfer}
Geoffrey Nunberg.
\newblock Transfers of meaning.
\newblock {\em Journal of semantics}, 12(2):109--132, 1995.

\bibitem{partee2008np}
Barbara Partee.
\newblock Noun phrase interpretation and type shifting principles.
\newblock In B.H. Partee and P.H. Portner, editors, {\em Formal Semantics: The
  Essential Readings}, pages 357--381. Wiley, 2008.

\bibitem{PetersWesterstahl2006quantifiers}
Stanley Peters and Dag Westerst{\aa}hl.
\newblock {\em Quantifiers in Language and Logic}.
\newblock Clarendon Press, 2006.

\bibitem{LACL2011}
Sylvain Pogodalla and Jean-Philippe Prost, editors.
\newblock {\em Logical Aspects of Computational Linguistics - 6th International
  Conference, LACL 2011, Montpellier, France, June 29 - July 1, 2011.
  Proceedings}, volume 6736 of {\em LNCS}. Springer, 2011.

\bibitem{Pustejovsky91}
James Pustejovsky.
\newblock The generative lexicon.
\newblock {\em Computational Linguistics}, 17(4):409--441, 1991.

\bibitem{Pus95}
James Pustejovsky.
\newblock {\em The generative lexicon}.
\newblock M.I.T. Press, 1995.

\bibitem{RealCoelhoRetore2013unilog}
Livy-Maria Real-Coelho and Christian Retor{\'e}.
\newblock A generative {M}ontagovian lexicon for polysemous deverbal nouns.
\newblock In {\em 4th World Congress and School on Universal Logic -- Workshop
  on Logic and linguistics.}, Rio de Janeiro, April 2013.

\bibitem{RealRetore2013jolli}
Livy-Maria Real-Coelho and Christian Retor{\'e}.
\newblock On the semantics of deverbals in a richly typed system.
\newblock {\em {J}ournal of {L}ogic {L}anguage and {I}nformation}, 2013.
\newblock To appear.

\bibitem{Retore2012rlv}
Christian Retor{\'e}.
\newblock Variable types for meaning assembly: a logical syntax for generic
  noun phrases introduced by "most".
\newblock {\em Recherches Linguistiques de Vincennes}, 41:83--102, 2012.

\bibitem{Retore2013lc}
Christian Retor{\'e}.
\newblock A natural framework for natural language semantics: many sorted logic
  and {H}ilbert operators in type theory.
\newblock In M\'ario Edmundo and Boban Velickovic, editors, {\em Logic
  colloquium}, Evora, 2013.

\bibitem{Retore2013taln}
Christian Retor{\'e}.
\newblock S{\'e}mantique des d{\'e}terminants dans un cadre richement typ{\'e}.
\newblock In Emmanuel Morin and Yannick Est{\`e}ve, editors, {\em Traitement
  Automatique du Langage Naturel, TALN RECITAL 2013}, volume~1, pages 367--380.
  ACL Anthology, 2013.

\bibitem{Soloviev2003}
Sergei Soloviev and David Chemouil.
\newblock {Some Algebraic Structures in Lambda-Calculus with Inductive Types}.
\newblock In Stefano Berardi, Mario Coppo, and Ferruccio Damiani, editors, {\em
  TYPES}, volume 3085 of {\em Lecture Notes in Computer Science}, pages
  338--354. Springer, 2003.

\bibitem{Soloviev00coercioncompletion}
Sergei Soloviev and Zhaohui Luo.
\newblock Coercion completion and conservativity in coercive subtyping.
\newblock {\em Annals of Pure and Applied Logic}, 1-3(113):297--322, 2000.

\bibitem{talmy99fictive}
Leonard Talmy.
\newblock Fictive motion in language and ``ception''.
\newblock In Paul Bloom, Mary~A. Peterson, Lynn Nadel, and Merrill~F. Garrett,
  editors, {\em Language and Space}, pages 211--276. {MIT} Press, 1999.

\bibitem{Heusinger1997}
Klaus von Heusinger.
\newblock Definite descriptions and choice functions.
\newblock In S.~Akama, editor, {\em Logic, Language and Computation}, pages
  61--91. Kluwer, 1997.

\bibitem{Heusinger2004}
Klaus von Heusinger.
\newblock Choice functions and the anaphoric semantics of definite nps.
\newblock {\em Research on Language and Computation}, 2:309--329, 2004.

\bibitem{LuoXue2012lacl}
Tao Xue and Zhaohui Luo.
\newblock Dot-types and their implementation.
\newblock In B{\'e}chet and Dikovsky \cite{LACL2012}, pages 234--249.

\bibitem{Zettlemoyer2009acl}
Luke~S. Zettlemoyer and Michael Collins.
\newblock Learning context-dependent mappings from sentences to logical form.
\newblock In Keh-Yih Su, Jian Su, and Janyce Wiebe, editors, {\em ACL/IJCNLP},
  pages 976--984. The Association for Computer Linguistics, 2009.

\end{thebibliography}


\end{document}